\newcommand{\be}{\begin{equation}}
\newcommand{\ee}{\end{equation}}
\newcommand{\bea}{\begin{eqnarray}}
\newcommand{\eea}{\end{eqnarray}}
\def\XXint#1#2#3{{\setbox0=\hbox{$#1{#2#3}{\int}$}
     \vcenter{\hbox{$#2#3$}}\kern-.5\wd0}}
\def\doi{http://dx.doi.org/}
\let\OLDthebibliography\thebibliography
\renewcommand\thebibliography[1]{
  \OLDthebibliography{#1}
  \setlength{\parskip}{0pt}
  \setlength{\itemsep}{0pt plus 0.3ex}
}
\newcommand{\Tr}{\,\text{tr}\,}
\newcommand{\D}[1]{\text{d}#1}
\newtheorem{prop}{Property}
\begin{document}

\begin{center}
{\Large\bf{Exact mean-field solution of a spin chain with short-range and long-range interactions
}}
\end{center}
%%%%%%%%%%%%%%%%%%%%%%%%%%%%%%%%%%%%%%%

\begin{center}
Etienne Granet\textsuperscript{1$\star$},
\end{center}
\begin{center}
{\bf 1} Kadanoff Center for Theoretical Physics, University of Chicago, 5640 South Ellis Ave, Chicago, IL 60637, USA\\
${}^\star$ {\sf\small egranet@uchicago.edu}
\end{center}

\section*{Abstract}
{\bf
We consider the transverse field Ising model with additional all-to-all interactions between the spins. We show that a mean-field treatment of this model becomes \emph{exact} in the thermodynamic limit, despite the presence of 1D short-range interactions. Namely, we show that the eigenstates of the model are coherent states with an amplitude that varies through the Hilbert space, within which expectation values of local observables can be computed with mean-field theory. We study then the thermodynamics of the model and identify the different phases. Among its peculiar features, this 1D model possesses a second-order phase transition at finite temperature and exhibits inverse melting.
}
%

%\vspace{10pt}
%\noindent\rule{\textwidth}{1pt}
%\tableofcontents\thispagestyle{fancy}
%\noindent\rule{\textwidth}{1pt}
%\vspace{10pt}

\section{Introduction}
Mean field theory (MFT) is a useful approximation that gives a rough qualitative idea of the behaviour of a model that is intractable otherwise, see e.g. \cite{kadanoff2009more}. But this approximation is rarely exact. Exception cases include models where each component interacts with a large number of other components, such as lattices with infinite dimensions \cite{thompson1974ising,pearce1978the,dongenand1990exact,georges1996dynamical}, or long-range interacting models with a power-law exponent smaller than the dimension \cite{cannas2000evidence,fisher1972critical,tamarit2000rotators,mori2010analysis}. It also includes cases where, on the contrary, the components do not interact directly with their neighbours, such as infinite-temperature limits \cite{georges1991how}, or only interact collectively through a component that is singled out  \cite{gaudin1976diagonalisation,carollo2021exactness}. The common feature of these cases is that the eigenstates  are tensor products at each site, and correlations are constant in space. But in presence of nearest neighbour interactions, MFT is expected to be only an approximation. 

In this work, we consider the 1D Transverse Field Ising Model (TFIM) \cite{pfeuty1970the} with additional all-to-all interactions between the spins, scaled in a way that the all-to-all interactions and the nearest neighbour interactions in the TFIM both contribute to the energy densities in the thermodynamic limit. We show that for this model, MFT becomes exact in the thermodynamic limit, in the sense that the energy density of any state, as well as the expectation value of local operators, can be computed with a MFT Hamiltonian. This MFT Hamiltonian is the TFIM in which the transverse magnetic field is determined self-consistently, see below in Section \ref{meanfieldsec}. This includes in particular  equilibrium expectation values of local operators at zero or finite temperature in the thermodynamic limit. Although all-to-all interactions are a usual feature of models for which MFT is exact, it is unexpected that the 1D nearest neighbour interactions present in the TFIM do not spoil the exactness of MFT. In particular, the eigenstates of the model are \emph{not} tensor products at each site, and display non-constant correlations in space. To establish this result, we show that one can obtain eigenstates of the model in terms of coherent states with an amplitude that varies through the Hilbert space. This consists in a generalization  of the TFIM eigenstates which are regular coherent states, i.e. with an amplitude that is constant through the Hilbert space.

From this exact solution, we study then the behaviour of the model at both zero and finite temperature, and find a number of interesting features.  We find that at zero temperature, the all-to-all interactions displace the critical value of the magnetic field, and marginally corrects the critical behaviour of observables. More notably, the Hamiltonian exhibits a second-order phase transition at finite temperature. This is allowed in this 1D model because of the long-range interactions and does not contradict the Mermin-Wagner theorem that applies to short-range interactions only \cite{mermin1966absence}. For the simplicity of its solution, this model could thus serve as a useful toy model for finite temperature transitions in quantum models, usually appearing in unsolvable 2D models. Finally, there is a tiny region in parameter space where the model presents inverse melting/freezing, i.e. a region where increasing the temperature drives the system into an ordered phase \cite{greer2000too}. This seems to be the simplest quantum model with this behaviour \cite{almudallal2014inverse,schupper2004spin,thomas2011simplest}. 

Finally, let us mention occurrences of models with all-to-all interactions in the literature. The  model we study has itself appeared in different contexts, such as particle-number-conserving version the Kitaev wire model \cite{lapa2020rigorous}, power-law interactions with exponent smaller than $1$  \cite{tamarit2000rotators,campa2000canonical,vollmayr2001kac}, or in models of quantum optics at zero temperature  \cite{rohn2020ising}. Celebrated models with all-to-all interactions are the Curie-Weiss model \cite{kac1968mathematical} and the Lipkin-Meshkov-Glick model \cite{lipkin1965validity}, and different models with all-to-all interactions have attracted attention recently \cite{orus2008equivalence,igloi2018quantum,lerose2018chaotic,lerose2019impact,schmid2022tricritical}.

\textit{Note added.} After appearance of this manuscript on the arXiv, Ref \cite{ouden1976systems} was brought to our attention, in which it was shown with a different approach that the energy density at thermal equilibrium can be computed with MFT in a class of models that contains \eqref{hamil}.

\section{Model and notations \label{notations}}

\subsection{The Hamiltonian}
We consider the Hamiltonian on a chain with $L$ sites
\begin{equation}\label{hamil}
H(h,\lambda)=-\sum_{j=1}^L \sigma_j^z\sigma_{j+1}^z+h\sum_{j=1}^L \sigma_j^x+\frac{\lambda}{L}\sum_{j,k=1}^L \sigma_j^x\sigma_k^x\,,
\end{equation}
with $\sigma_j^{x,y,z}$ the Pauli matrices at site $j$, and $h,\lambda$ parameters. We impose periodic boundary conditions $\sigma^{x,y,z}_{L+1}=\sigma^{x,y,z}_{1}$. We note that the rightmost all-to-all interaction term has been rescaled by a factor $1/L$ in order to be of order $\mathcal{O}(L)$ as the other terms. The case $\lambda=0$ corresponds to the Transverse Field Ising Model (TFIM), which is solvable in finite size $L$ \cite{pfeuty1970the,calabrese2012quantum}.

\subsection{Notations \label{matrixelements}}
The Hamiltonian  commutes with the symmetry operator
\begin{equation}
    S=\prod_{j=1}^L \sigma_j^x\,,
\end{equation}
and so splits into two sectors where $S=\pm 1$. We perform a Jordan-Wigner transformation by introducing operators $c_j$ that satisfy canonical anticommutation relations $\{c_j,c_k\}=0$ and $\{c_j,c^\dagger_k\}=\delta_{j,k}$, and such that
\begin{equation}
\sigma_j^x=1-2c_j^\dagger c_j\,,\qquad \sigma_j^z=(c_j+c_j^\dagger)\prod_{\ell=1}^{j-1}(1-2c^\dagger_\ell c_\ell)\,.
\end{equation}
The fermions are given periodic boundary conditions $c_{L+1}=c_1$ in the $S=-1$ sector and antiperiodic boundary conditions $c_{L+1}=-c_1$ in the $S=1$ sector \cite{pfeuty1970the,calabrese2012quantum}. We then define
\begin{equation}
c(k)=\frac{1}{\sqrt{L}}\sum_{j=1}^L e^{ijk}c_j
\end{equation}
with in the $S=1$ sector
\begin{equation}\label{K}
k\in K=\left\{\frac{2\pi (n+1/2)}{L}\,, n=-L,...,L-1\right\}\,,
\end{equation}
and in the $S=-1$ sector
\begin{equation}
k\in K=\left\{\frac{2\pi n}{L}\,, n=-L,...,L-1\right\}\,.
\end{equation}
For each sector, given $\pmb{k}\subset K$ a subset of momenta with an even/odd number of elements for $S=\pm1$, we define the state
\begin{equation}\label{order}
|\pmb{k}\rangle=\prod_{k\in\pmb{k}}c^\dagger_k|0\rangle\,,
\end{equation}
with $|0\rangle$ the tensor product of $+1$ eigenstates of $\sigma_j^x$ at each site. In this expression, an arbitrary fixed ordering of the $c^\dagger_k$'s is chosen for each $\pmb{k}$. We choose this ordering such that
\begin{equation}
    |\pmb{k}\cup \{q,-q\}\rangle=c_{-q}^\dagger c_{q}^\dagger |\pmb{k}\rangle\,.
\end{equation}
Finally, we define the shorthand notations
\begin{equation}
S_{zz}=\sum_{j=1}^L \sigma_j^z\sigma_{j+1}^z\,,\qquad S_x=\sum_{j=1}^L \sigma_j^x\,,
\end{equation}
in terms of which the Hamiltonian is
\begin{equation}
H(h,\lambda)=-S_{zz}+hS_x+\frac{\lambda}{L}S_x^2\,.
\end{equation}
For future reference, let us give the matrix elements of $S_{zz}$ and $S_x$ in the basis of the $|\pmb{k}\rangle$'s. Those of $S_x$ are
\begin{equation}\label{mm}
\langle \pmb{k}|S_x|\pmb{k}\rangle=L-2\sum_{k\in\pmb{k}}1\,,
\end{equation}
and all the other matrix elements are zero. As for $S_{zz}$, we have for $q\in K$
\begin{equation}\label{mh}
\langle \pmb{k}|S_{zz}|\pmb{k}\rangle=2\sum_{k\in\pmb{k}}\cos k\,,\qquad \langle \pmb{k}|S_{zz}|\pmb{k}\cup\{q,-q\}\rangle=-2i\sin q\,,
\end{equation}
if $q,-q\notin \pmb{k}$. All the other non-related matrix elements are zero.

\subsection{Relation with other models}
In this Section we review models that can be mapped to $H(h,\lambda)$.

\subsubsection{Particle-number-conserving Kitaev model}
In \cite{lapa2020rigorous} was introduced particle-number conserving version of the Kitaev wire model \cite{kitaev2001unpaired}. The Hamiltonian reads
\begin{equation}
\begin{aligned}
    H_{SC}=&-\frac{t}{2}\sum_{j=1}^{L-1}(c_j^\dagger c_{j+1}+c_{j+1}^\dagger c_{j})-\frac{\Delta}{2}\sum_{j=1}^{L-1}(c_j c_{j+1}e^{i\phi}+c_{j+1}^\dagger c_j^\dagger e^{-i\phi})-\mu N_w+\frac{4\mathcal{E}_c}{L}(n-n_c)^2\,,
    \end{aligned}
\end{equation}
where $t,\Delta,\mu,n_c,\mathcal{E}_c$ are real parameters, $c_j$ canonical fermions, $N_w=\sum_{j=1}^L c_j^\dagger c_j$ is the operator counting the number of particles in the wire, $n$ the operator counting the number of Cooper pairs and $e^{i\phi}$ is a ladder operator for $n$, i.e. $[n,e^{i\phi}]=e^{i\phi}$. Particle number conservation in the wire and superconductor is implemented by requiring $N=N_w+2n$ to be fixed constant. The scaling with $L$ of the charging energy $\frac{\mathcal{E}_c}{L}$ of the superconductor comes from the Coulomb interaction in the 3D superconductor \cite{lapa2020rigorous}. At $t=\Delta$, by writing $n$ in terms of $N_w$ and using the Jordan-Wigner transformation, it is seen that up to an additive constant this model is equivalent to $\frac{t}{2}H(h,\lambda)$ in the thermodynamic limit with
\begin{equation}
    h=\frac{\mu}{t}-\frac{2\mathcal{E}_c}{Lt}(2n_c+2L-N)\,,\qquad \lambda=\frac{2\mathcal{E}_c}{t}\,.
\end{equation}
In \cite{lapa2020rigorous,lapa2001stability} were established multiple properties that the Hamiltonian $H_{SC}$ shares with the Hamiltonian obtained by treating the quadratic term $(n-n_c)^2$ in a mean-field way.

\subsubsection{Many-body cavity systems}
Similar systems to \eqref{hamil} can be realized with cold atoms on optical lattices interacting with a cavity, see e.g. \cite{kumar2020large,yuan2017single,liao2021simulating,li2021effective,szabo2021entanglement,kirton2019introduction}. All-to-all interactions between spins can appear as a resulting coupling to an external ancilla degree of freedom. One example is the so-called Dicke-Ising Hamiltonian. The Dicke Hamiltonian is a fundamental model for light-matter interactions that reads \cite{dicke1954coherence}
\begin{equation}
H_{\rm Dicke}=\omega_c a^\dagger a+\frac{g}{\sqrt{L}}(a+a^\dagger)\sum_{j=1}^L \sigma_j^x\,,
\end{equation}
with a bosonic operator $a$ satisfying the canonical commutation relation $[a,a^\dagger]=1$, and with $g,\omega_c>0$ real parameters. The Dicke-Ising model is then obtained by imposing Ising interactions between the spins
\begin{equation}
H_{\rm Dicke-Ising}=H_{\rm Dicke}-\sum_{j=1}^L \sigma^z_j \sigma^z_{j+1}+h\sum_{j=1}^L \sigma_j^x\,.
\end{equation}
It was shown in \cite{rohn2020ising} that the ground state energy density of the Dick-Ising model is the same as the ground state of $H(h,\lambda)$ for
\begin{equation}
    \lambda= -\frac{g^2}{4\omega_c}\,.
\end{equation}

\subsubsection{Long-range Ising chain with Kac rescaling}
The Hamiltonian \eqref{hamil} is also related to a long-range Ising chain. Let us define
\begin{equation}\label{long}
    H=-\sum_{j=1}^L \sigma^z_j \sigma^z_{j+1}+h\sum_{j=1}^L \sigma_j^x+\mu \sum_{i\neq j} \frac{\sigma_i^x \sigma_j^x}{|i-j|_L^\alpha}\,,
\end{equation}
for some parameters $\mu$ and $\alpha$. We set here $|n|_L=\min(|n|,|n+L/2|,|n-L/2|)$ in the last term to be compatible with the periodic boundary conditions. If $\alpha>1$, this last term is  of order $\mathcal{O}(L)$. However, if $\alpha<1$ it is of order $\mathcal{O}(L^{2-\alpha})$. In particular, it makes the ground state energy super-extensive. A simple way to define a long-range interacting model with $\alpha<1$  with extensive energy is to consider the so-called Kac prescription \cite{kac1963on}, that is setting
\begin{equation}
    \mu=\frac{\nu}{L^{1-\alpha}}\,,
\end{equation}
for $\nu$ constant. This ensures that the energy levels of $H$ are of order $\mathcal{O}(L)$. But then, one sees that any term with $|i-j|=\mathcal{O}(1)$ is suppressed in the thermodynamic limit. Let us consider $|\psi\rangle$ a state that satisfies clustering for $\sigma^x$, namely such that for large $|i-j|$
\begin{equation}
    \langle \psi| \sigma_i^x \sigma_j^x |\psi\rangle=\langle \psi| \sigma_i^x|\psi\rangle \langle \psi| \sigma_j^x |\psi\rangle+o(|i-j|^0)\,.
\end{equation}
Then $|\psi\rangle$ has the same energy density in $H$ and in $H(h,\lambda)$
\begin{equation}
   \frac{1}{L} \langle \psi| H |\psi\rangle= \frac{1}{L} \langle \psi| H(h,\lambda) |\psi\rangle+o(L^0)\,,
\end{equation}
with
\begin{equation}
    \lambda=\frac{2^\alpha}{1-\alpha} \nu\,.
\end{equation}
Hence, provided the clustering property holds for an eigenstate of $H$, this wave function is also an eigenstate of the all-to-all interacting chain $H(h,\lambda)$ in the thermodynamic limit. This kind of reduction from a power-law interacting model with exponent smaller than $1$ to an all-to-all interacting model has been observed and shown for the ground state of other models \cite{tamarit2000rotators,campa2000canonical,vollmayr2001kac}.

\section{Diagonalizing $H(h,\lambda)$ in the thermodynamic limit \label{diago}}
\subsection{Preservation of pair structure \label{pairset}}
In the remainder of the paper, we will fix the sector to $S=1$, and so set $K$ to \eqref{K}. Let us first show a block diagonal structure of the Hamiltonian $H(h,\lambda)$. From the matrix elements \eqref{mm} and \eqref{mh}, we see that $H(h,\lambda)$ can have non-zero matrix elements in the basis of the $|\pmb{k}\rangle$'s only between states that differ by a pair of momenta $q,-q$. This suggests to decompose a set of momenta $\pmb{k}\subset K$ into momenta $k\in\pmb{k}$ that are paired, i.e. for which $-k\in\pmb{k}$, and momenta that are single, for which $-k\notin\pmb{k}$. Specifically, we introduce
\begin{equation}
    K_+=\{k\in K\quad |\quad k>0\}\,,
\end{equation}
the set of positive momenta, and for $\pmb{k}\subset K_+$ define $\pmb{\bar{k}}\subset K$ as
\begin{equation}
    \pmb{\bar{k}}=\pmb{k} \cup (-\pmb{k})\,.
\end{equation}
Besides, we  call $\pmb{s}\subset K$ a set of single momenta if it has an even number of elements and  if $-s\notin\pmb{s}$ for $s\in\pmb{s}$. Then we define
\begin{equation}
    K_+^{\pmb{s}}=\left\{ k\in K \quad|\quad k>0\,,\quad k\notin \pmb{s}\,,\quad -k\notin \pmb{s} \right\}\,,
\end{equation}
the set of strictly positive momenta that do not belong to $\pmb{s}$ and whose opposite do not belong to $\pmb{s}$. Any set of momenta $\pmb{p}\subset K$ can be decomposed uniquely as $\pmb{p}=\pmb{\bar{k}}\cup \pmb{s}$ for some set of single momenta $\pmb{s}$ and $\pmb{k}\subset K_+^{\pmb{s}}$. Hence, from the matrix elements \eqref{mm} and \eqref{mh}, fixing a set of single momenta $\pmb{s}$, we have that
\begin{equation}
    H(h,\lambda)|\pmb{\bar{k}}\cup \pmb{s}\rangle
\end{equation}
is a linear combination of states $|\pmb{\bar{p}}\cup \pmb{s}\rangle$ with $\pmb{p}\subset K_+^{\pmb{s}}$. Namely, $H(h,\lambda)$ splits into sectors with fixed set of single momenta $\pmb{s}$, and so can be diagonalized separately in each.

For simplicity and lightness of the notations, we will present in details the diagonalization of $H(h,\lambda)$ in the sector where the single set of momenta is the empty set $\pmb{s}=\emptyset$. The results for a generic set of single momenta $\pmb{s}$ will then be given in Section \ref{notsingle}.

%``branches"
\subsection{Warm-up: solving the TFIM with coherent states \label{free}}
 As a warm-up to the next sections, we show how to solve the TFIM case $\lambda=0$ using the so-called coherent states. The \emph{coherent state} $|\phi\rangle$ for a given function $\phi(q)$ defined on $K_+$, is the state defined by 
\begin{equation}\label{coher}
|\phi\rangle=A\sum_{\pmb{k}\subset K_+}\left(\prod_{k\in\pmb{k}}i\phi(k)\right)|\pmb{\bar{k}}\rangle=A \prod_{q\in K_+}\left(1+i\phi(q) c_{-q}^\dagger c_{q}^\dagger\right)|0\rangle\,,
\end{equation}
with $A=\prod_{\in K_+}\tfrac{1}{\sqrt{1+|\phi(k)|^2}}$ a normalization factor. The relevance of these states for the TFIM were first noticed in \cite{dreyer2022quantum,granet2022out}. It satisfies the  ``factorization property" for $q\notin \pmb{k}\subset K_+$
\begin{equation}\label{factor0}
\langle \pmb{\bar{k}}\cup\{q,-q\}|\phi\rangle=i\phi(q)\langle \pmb{\bar{k}}|\phi\rangle\,.
\end{equation}
Let us look for an eigenstate of $H(h,0)$ with $\lambda=0$ under the form of a coherent state $|\phi\rangle$. Given the matrix elements \eqref{mm} and \eqref{mh} we have
\begin{equation}
    \langle \pmb{\bar{k}}|S_x|\phi\rangle= \left(L-2\sum_{k\in\pmb{k}}1 \right) \langle \pmb{\bar{k}}|\phi\rangle
\end{equation}
and
\begin{equation}
\begin{aligned}
 \langle\pmb{\bar{k}}|S_{zz}|\phi\rangle&=\sum_{\pmb{q}\subset K} \langle\pmb{\bar{k}}|S_{zz}|\pmb{q}\rangle \langle \pmb{q}|\phi\rangle\\
 &= \langle\pmb{\bar{k}}|S_{zz}|\pmb{\bar{k}}\rangle \langle \pmb{\bar{k}}|\phi\rangle+\sum_{q\in \pmb{k}} \langle\pmb{\bar{k}}|S_{zz}|\pmb{\bar{k}}\setminus \{q,-q\}\rangle \langle \pmb{\bar{k}}\setminus \{q,-q\}|\phi\rangle\\
 &\qquad\qquad\qquad\qquad+\sum_{q\notin \pmb{k}} \langle\pmb{\bar{k}}|S_{zz}|\pmb{\bar{k}}\cup \{q,-q\}\rangle \langle \pmb{\bar{k}}\cup \{q,-q\}|\phi\rangle\,.
\end{aligned}
\end{equation}
Hence, using the factorization property \eqref{factor0}, we obtain
\begin{equation}
 \langle\pmb{\bar{k}}|H(h,0)|\phi\rangle=E(\pmb{k})\langle\pmb{\bar{k}}|\phi\rangle
\end{equation}
with the ``candidate energy"
\begin{equation}
E(\pmb{k})=hL-2\sum_{k\in K_+}\phi(k)\sin k +2\sum_{q\in\pmb{k}}[-2h-2\cos q+(\phi(q)-\tfrac{1}{\phi(q)})\sin q ]\,.
\end{equation}
Since $H(h,0)$ preserves the pair structure, $|\phi\rangle$ is an eigenstate of $H(h,0)$ if and only if $E(\pmb{k})$ is independent of $\pmb{k}$. This is is equivalent to having for all $q\in K_+$
\begin{equation}\label{candidate}
-2h-2\cos q+(\phi(q)-\tfrac{1}{\phi(q)})\sin q=0\,.
\end{equation}
For each $q\in K_+$ there are two solutions for $\phi(q)$
\begin{equation}\label{phi0}
\phi(q)=\frac{h+\cos q}{\sin q}\pm \sqrt{\left( \frac{h+\cos q}{\sin q}\right)^2+1}\,.
\end{equation}
Each of these $2^{L/2}$ choices for the function $\phi$ gives an eigenstate of $H(h,\lambda)$
\begin{equation}
H(h,\lambda)|\phi\rangle=E|\phi\rangle
\end{equation}
with energy
\begin{equation}
E=hL-2\sum_{k\in K_+}\phi(k)\sin k\,.
\end{equation}
Let us denote $E_0$ the energy obtained by choosing the $+$ sign in \eqref{phi0} for all $k$. Denoting then $\pmb{q}\subset K_+$ the subset of momenta for which the $-$ sign is chosen in \eqref{phi}, the energy of this eigenstate is
\begin{equation}
E=E_0+4\sum_{q\in \pmb{q}}\sqrt{1+h^2+2h\cos q}\,.
\end{equation}
These are exactly the energies of the $2^{L/2}$ paired eigenstates of the TFIM Hamiltonian $H(h,0)$ \cite{pfeuty1970the,calabrese2012quantum}.

\subsection{Density-resolved coherent state \label{densityre}}
Let us now generalize the approach of Section \ref{free} to the non-integrable case $\lambda\neq 0$. We consider $|\phi\rangle$ a linear combination of paired states $|\pmb{\bar{k}}\rangle$ with $\pmb{k}\subset K_+$, and define $\phi_{\pmb{k}}(q)$ with $q\in K_+$ by
\begin{equation}\label{factor}
\langle \pmb{\bar{k}}\cup\{q,-q\}|\phi\rangle=i\phi_{\pmb{k}}(q)\langle \pmb{\bar{k}}|\phi\rangle\,.
\end{equation}
We note that for any state with non-zero overlaps over the paired states $|\pmb{\bar{k}} \rangle$ there exists such a function $\phi_{\pmb{k}}(q)$ without further assumptions. Repeating the steps of Section \ref{free}, we obtain similarly
\begin{equation}
 \langle\pmb{\bar{k}}|H(h,\lambda)|\phi\rangle=E(\pmb{k})\langle\pmb{\bar{k}}|\phi\rangle
\end{equation}
with now the candidate energy
\begin{equation}\label{energy}
\begin{aligned}
E(\pmb{k})&=hL-2\sum_{k\in K_+}\phi_{\pmb{k}}(k)\sin k +2\sum_{q\in\pmb{k}}[-2h-2\cos q+(\phi_{\pmb{k}}(q)-\tfrac{1}{\phi_{\pmb{k}\setminus q}(q)})\sin q ]\\
&+\frac{\lambda}{L}\left(L-4\sum_{q\in\pmb{k}}1 \right)^2\,.
\end{aligned}
\end{equation}
The condition for $|\phi\rangle$ to be an eigenstate of $H(h,\lambda)$ is again that $E(\pmb{k})$ is independent of $\pmb{k}$. Contrary to the integrable case $\lambda=0$, we see however that a coherent state, i.e. a state for which $\phi_{\pmb{k}}(q)=\phi(q)$ is independent of $\pmb{k}$, cannot satisfy this condition. Indeed, sets with only one element $\pmb{k}=\{ q\}$ for $q\in K_+$ would fix the value of $\phi(q)-\tfrac{1}{\phi(q)}$ through a condition similar to \eqref{candidate}. But then, considering sets with two elements $\pmb{k}=\{ q_1,q_2\}$, we would necessarily have $E(\{q_1,q_2\})\neq E(\{q_1\})$ if $\lambda\neq 0$.

To go further, we are going to consider an ansatz state $|\phi\rangle$ for which in the thermodynamic limit, $\phi_{\pmb{k}}$ depends only on the density $\rho(k)$ of momenta in $\pmb{k}$, denoted then $\phi_{\rho}$, and that moreover $\phi_{\rho}$ has a smooth dependence in $\rho$. We are going to show that one can build indeed a functional $\phi_\rho$ that will make the candidate energy $E(\pmb{k})$ independent of $\pmb{k}$. So the state $|\phi\rangle$ is assumed to satisfy the factorization property in the thermodynamic limit
\begin{equation}\label{factor2}
\langle \pmb{\bar{k}}\cup\{q,-q\}|\phi\rangle=i\phi_{\rho}(q)\langle \pmb{\bar{k}}|\phi\rangle\,.
\end{equation}
with $\rho$ the density of $\pmb{k}$. Such a state will be called \emph{density-resolved coherent state}, in the sense that the function $\phi(q)$ involved in the coherent state factorization property \eqref{factor} now depends on the density of momenta $\rho$ in the state for which the overlap is computed. 

The functional $\phi_\rho(q)$ entering the definition of the density-resolved coherent state \eqref{factor2} cannot be chosen freely. The definition of $\phi_{\pmb{k}}$ in \eqref{factor} actually implies an important constraint on $\phi_\rho$ in the thermodynamic limit. Indeed, one can form a state by adding momenta in a different order, so that some consistency equations have to be satisfied on $\phi_{\pmb{k}}$. For example, adding momenta $q_1$ or $q_2$ first implies
\begin{equation}
\phi_{\pmb{k}}(q_1)\phi_{\pmb{k}\cup q_1}(q_2)=\phi_{\pmb{k}}(q_2)\phi_{\pmb{k}\cup q_2}(q_1)\,.
\end{equation}
We show in Appendix \ref{appendix1} that in the thermodynamic limit, the only resulting constraint on $\phi_\rho$ is
\begin{equation}\label{symm}
\frac{\partial_{\rho(q)}\phi_\rho (k)}{\phi_\rho(k)}=\frac{\partial_{\rho(k)}\phi_\rho (q)}{\phi_\rho(q)}\,,
\end{equation}
for all $k,q\in K_+$.

Returning to \eqref{energy}, the candidate energy density $e(\rho)\equiv \tfrac{E(\pmb{k})}{L}$ takes the following form in the thermodynamic limit
\begin{equation}\label{ee}
\begin{aligned}
e(\rho)=&h-\frac{1}{\pi}\int_0^\pi \phi_\rho(k)\sin k\D{k}+2\int_0^\pi \rho(k)\left[-2h-2\cos k+(\phi_{\rho}(k)-\tfrac{1}{\phi_{\rho}(k)})\sin k \right]\D{k}\\
&+\lambda\left(1-4\int_0^\pi\rho(k)\D{k} \right)^2\,.
\end{aligned}
\end{equation}
For $|\phi\rangle$ to be an eigenstate of $H(h,\lambda)$ one requires that $e(\rho)$ is independent of $\rho$. Without constraints on $\phi_\rho$, one could make $e(\rho)$ independent of $\rho$ in many ways, e.g. by replacing the magnetic field $h$ in \eqref{phi0} by $h+2\lambda(1-4\int_0^\pi\rho(k)\D{k})$. But the resulting $\phi_\rho$ would violate the constraint \eqref{symm}. The condition that $e(\rho)$ is independent of $\rho$ can be formulated as $\partial_{\rho(q)}e(\rho)=0$ for all $q\in K_+$ and for all $\rho$. This is
\begin{equation}\label{eqto}
\begin{aligned}
&\partial_{\rho(q)}e(\rho)=2\left[-2h-2\cos q+(\phi_{\rho}(q)-\tfrac{1}{\phi_{\rho}(q)})\sin q \right]-8\lambda \left(1-4\int_0^\pi\rho(k)\D{k} \right)\\
&+2\int_0^\pi \partial_{\rho(q)}\phi_\rho(k)\sin k\left[\rho(k) \frac{1+\phi_\rho(k)^2}{\phi_\rho(k)^2} -\frac{1}{2\pi} \right]\D{k}=0\,.
\end{aligned}
\end{equation}
However, finding a solution $\phi_\rho(k)$ to this equation that satisfies the constraint \eqref{symm} is a priori a difficult task.

\subsection{Dominant density \label{domi}}
To go further, let us come back to the factorization property \eqref{factor2}. Because of this relation, one sees that the overlap of $|\phi\rangle$ with a basis state $|\pmb{\bar{k}}\rangle$ will be exponential in $L$ and take the following form in the thermodynamic limit
\begin{equation}\label{b}
|\langle \pmb{\bar{k}}|\phi\rangle|^2=a[\rho]e^{Lb[\rho]}\,,
\end{equation}
with $a,b$ some functionals of $\rho$ of order $L^0$. Here, \eqref{factor2} implies
\begin{equation}\label{edbrho}
e^{\partial_{\rho(q)}b[\rho]}=|\phi_\rho(q)|^2\,.
\end{equation}
Let us now consider $\mathcal{O}$ an observable that is local in the basis of the $|\pmb{k}\rangle$'s, i.e. that has non-zero matrix elements $\langle \pmb{k}|\mathcal{O}|\pmb{q}\rangle \neq 0$ only if $\pmb{q}$ has less than $n$ momenta that differ from $\pmb{k}$, with $n$ fixed. We have then
\begin{equation}
\langle \phi| \mathcal{O}|\phi\rangle=\sum_{\pmb{k},\pmb{q}\subset K} \langle \pmb{k}|\phi\rangle \langle \pmb{q}|\phi\rangle^* \langle \pmb{q}|\mathcal{O}|\pmb{k}\rangle\,.
\end{equation}
Using that $\mathcal{O}$ is local in the basis, one can write
\begin{equation}\label{ewq}
\langle \phi| \mathcal{O}|\phi\rangle=\sum_{\pmb{k}\subset K_+} |\langle \pmb{\bar{k}}|\phi\rangle|^2 \left(\sum_{q_1,...,q_n} \prod_{j=1}^ni\tilde{\phi}_\rho(q_j) \langle \pmb{\bar{q}}|\mathcal{O}|\pmb{\bar{k}}\rangle\right)\,,
\end{equation}
where $\tilde{\phi}_\rho(q_j)=\phi_\rho(q_j)$ if $q_j\notin\pmb{k}$ and $\tilde{\phi}_\rho(q_j)=-1/\phi_\rho(q_j)$ if $q_j\in\pmb{k}$, with $\rho$ is the density of $\pmb{k}$. The quantity in parentheses is of order at most polynomial in $L$, whereas the overlap $|\langle \pmb{\bar{k}}|\phi\rangle|^2$ is exponential in $L$. Moreover, there are $e^{LS[\rho]}$ terms whose $\pmb{k}$ has density $\rho$ in the thermodynamic limit, with the entropy
\begin{equation}\label{ss}
    S[\rho]=-\frac{1}{2\pi}\int_0^\pi 2\pi\rho \log(2\pi\rho) +(1-2\pi\rho)\log(1-2\pi\rho)\,.
\end{equation}
Hence in the thermodynamic limit, states $|\pmb{\bar{k}}\rangle$ with density $\rho$ contribute to $\langle \phi| \mathcal{O}|\phi\rangle$ with a weight $e^{L w[\rho]}$ where
 \begin{equation}\label{ww}
w[\rho]=b[\rho]+S[\rho]\,.
\end{equation}
It follows that in the thermodynamic limit, the states with  density $\rho_*$ that maximises the exponent  $w[\rho]$ will dominate exponentially. A necessary condition is $\partial_{\rho(q)}w[\rho]=0$ at $\rho=\rho_*$, which gives
\begin{equation}
\partial_{\rho(q)}b[\rho_*]-\log \frac{\rho_*(q)}{\tfrac{1}{2\pi}-\rho_*(q)}=0\,,
\end{equation}
namely
\begin{equation}\label{star}
\rho_*(q)=\frac{1}{2\pi}\frac{|\phi_{\rho_*}(q)|^2}{1+|\phi_{\rho_*}(q)|^2}\,.
\end{equation}
Hence, as long as operators local in the basis are concerned, expectation values within a density-resolved coherent state $|\phi\rangle$ will depend \emph{only} on the value of $\phi_{\rho_*}(q)$ where the dominant density $\rho_*$ satisfies \eqref{star}.

We note that this condition \eqref{star} would also be satisfied by a local minimum of $w[\rho]$, whereas the dominant density $\rho_*$ should be a maximum. We will come back to this in Section \ref{beyond}.

\subsection{The energy densities in the thermodynamic limit \label{spec}}
Let us now come back to the condition for which $|\phi\rangle$ is an eigenstate of $H(h,\lambda)$ in the thermodynamic limit, i.e. $\partial_{\rho(q)}e(\rho)=0$ written in \eqref{eqto}. One sees that, remarkably, if $\phi_\rho(k)$ is real, this condition simplifies greatly at the dominant density $\rho_*$ since the coefficient in front of $\partial_{\rho(q)}\phi_\rho(k)$ vanishes. This yields the following quadratic equation on $\phi_{\rho_*}(k)$
\begin{equation}
-2h-2\cos q+(\phi_{\rho_*}(q)-\tfrac{1}{\phi_{\rho_*}(q)})\sin q -4\lambda \left(1-4\int_0^\pi\rho_*(k)\D{k} \right)=0\,.
\end{equation}
We obtain thus the exact expression
\begin{equation}\label{phi}
\phi_{\rho_*}(q)=\frac{h+2\lambda(1-4\mathcal{D})+\cos q}{\sin q}\pm \sqrt{\left( \frac{h+2\lambda(1-4\mathcal{D})+\cos q}{\sin q}\right)^2+1}\,,
\end{equation}
where we introduced
\begin{equation}
\mathcal{D}=\int_0^\pi\rho_*(k)\D{k}\,.
\end{equation}
Hence, notably, we can determine the only needed value of $\phi_\rho$ in the thermodynamic limit without solving the full equation \eqref{eqto}. The choice of the $\pm$ sign in \eqref{phi} parametrizes different eigenstates, as in the TFIM case. For each of these choices, the condition on the dominant density \eqref{star} implies then that $\mathcal{D}$ satisfies the self-consistent equation
\begin{equation}\label{do}
\mathcal{D}=\frac{1}{2\pi}\int_0^\pi\frac{\phi_{\rho_*}(q)^2}{1+\phi_{\rho_*}(q)^2}\D{k}\,.
\end{equation}
 The value of the energy density of this eigenstate is then given by $e(\rho_*)$ in \eqref{ee}. This reads
 \begin{equation}
e(\rho_*)=h+\lambda-\frac{1}{\pi}\int_0^\pi \phi_{\rho_*}(k)\sin k\D{k}-16\lambda \mathcal{D}^2\,.
\end{equation}

 \subsection{Interpretation in terms of a mean-field Hamiltonian \label{meanfieldsec}}
We can reformulate the previous results in a more familiar way. Let us denote
\begin{equation}
    x=h+2\lambda(1-4\mathcal{D})\,,
\end{equation}
and $\phi_\pm(q)$ the expression \eqref{phi} on the right-hand side. Introducing
\begin{equation}
\varepsilon_x(k)=\sqrt{1+x^2+2x \cos k}\,,
\end{equation}
we compute
\begin{equation}
    \begin{aligned}
        \phi_+(q)\sin q-\phi_-(q)\sin q&=2\varepsilon_x(q)\\
        \frac{\phi_+^2(q)}{1+\phi_+^2(q)}- \frac{\phi_-^2(q)}{1+\phi_-^2(q)}&=\partial_x \varepsilon_x(q)\,.
    \end{aligned}
\end{equation}
Hence, introducing the density of excitations $0\leq \nu(k)\leq \tfrac{1}{2\pi}$ indicating the density of the $k$'s for which the $-$ sign is chosen in \eqref{phi}, we find that the energy density $\mathcal{E}(\nu)$ reads
\begin{equation}\label{enu}
\mathcal{E}(\nu)=-\frac{1}{\pi}\int_0^\pi \varepsilon_{x}(k)\left(1-4\pi\nu(k) \right)\D{k}-\frac{(x-h)^2}{4\lambda}
%h+\lambda-16\lambda \mathcal{D}^2-\frac{1}{\pi}\int_0^\pi \frac{\sin k}{\tan(\theta_\mathcal{D}(k)/2)}\D{k}+\int_0^\pi \nu(k)\varepsilon_\mathcal{D}(k)\D{k}\,,
\end{equation}
where from \eqref{do} $x$ satisfies the equation
\begin{equation}\label{D0}
 x+\frac{2\lambda}{\pi}\int_0^\pi \partial_x\varepsilon_{x}(k)\left(1-4\pi\nu(k) \right)\D{k}=h\,.
\end{equation}
We note that this equation is exactly the extremality condition of $\mathcal{E}(\nu)$ with respect to $x$, namely $\partial_x \mathcal{E}(\nu)=0$. Such formulation precisely corresponds to a mean field TFIM with an effective transverse field $x$
\begin{equation}\label{meanfield}
    H=-\sum_{j=1}^L \sigma_j^z \sigma_{j+1}^z+x\sum_{j=1}^L \sigma_j^x-\frac{(x-h)^2}{4\lambda}\,,\qquad x=h+2\lambda\left\langle \frac{1}{L}\sum_{j=1}^L \sigma_j^x\right\rangle\,,
\end{equation}
 where the expectation value is taken in the eigenstate considered with fixed excitations $\nu(k)$. This amounts to making the mean-field ``approximation" in $H(h,\lambda)$
 \begin{equation}
    \left(\sum_{j=1}^L \sigma_j^x \right)^2 \longrightarrow  2\sum_{j=1}^L \sigma_j^x \left\langle \sum_{j=1}^L \sigma_j^x \right\rangle-  \left\langle \sum_{j=1}^L \sigma_j^x \right\rangle^2\,.
 \end{equation}
 The derivation of the previous subsections of Section \ref{diago} shows that the MFT Hamiltonian \eqref{meanfield} becomes \emph{exact} in the thermodynamic limit.

\subsection{Local correlations in the thermodynamic limit \label{corr}}
The fact that eigenstates of $H(h,\lambda)$ are density-resolved coherent states also allows for an exact expression of a large class of expectation values and correlation functions. Indeed, for any observable $\mathcal{O}$ that is local in the basis of $\pmb{k}$'s, the reasoning of Section \ref{domi} applies and in the thermodynamic limit the expectation value of $\mathcal{O}$ is
\begin{equation}
\underset{L\to\infty}{\lim} \langle \phi_\rho|\mathcal{O}|\phi_\rho\rangle= \langle \phi_{\rho_*}|\mathcal{O}|\phi_{\rho_*}\rangle\,.
\end{equation}
This means it can be computed within a (not density-resolved) coherent state with amplitude $\phi_{\rho_*}(k)$. For these states, techniques allow for the computation of any expectation value of operators that are local in the basis. We refer the reader to Ref \cite{granet2022out} where expectation values of various observables are computed within a coherent state.

\subsection{Computing $\phi_\rho$ for $\rho\neq \rho_*$ \label{beyond}}
In this Section, we investigate the construction of a solution $\phi_\rho$ to \eqref{eqto} that satisfies \eqref{symm}, perturbatively in
\begin{equation}
\delta\rho(k)=\rho(k)-\rho_*(k)\,.
\end{equation}
To that end, we introduce $F_\rho(k)$ the functional defined by
\begin{equation}
    \phi_\rho(k)=\phi_{\rho_*}(k) \exp F_{\rho}(k)\,.
\end{equation}
The constraint \eqref{symm} translates into the fact that the quantity
\begin{equation}
    f_n(q_1,...,q_n)=\partial_{\rho(q_1)}...\partial_{\rho(q_{n-1})}F_{\rho}(q_n)|_{\rho=\rho_*}
\end{equation}
must be symmetric in its arguments $q_1,...,q_n$ for all $q_i$'s and $n$. We recall that $|\phi\rangle$ is an eigenstate of $H(h,\lambda)$ if and only if the candidate energy density $e(\rho)$ in \eqref{ee} is independent of $\rho$. We saw that the equation $\partial_\rho(q) e(\rho)=0$ at $\rho=\rho_*$ yields an equation on $\phi_{\rho_*}$. Similarly, an equation on $f_n$ is obtained from
\begin{equation}
    \partial_{\rho(q_1)}...\partial_{\rho(q_n)}e(\rho)|_{\rho=\rho_*}=0\,.
\end{equation}
From the expression \eqref{ee} for $e(\rho)$, and using the condition for the dominant density \eqref{star}, we obtain for $n=2$ the following algebraic Riccati equation on $f_2$
\begin{equation}\label{ricatti}
    \begin{aligned}
        & 16\lambda+f_2(k,q)\left(\sin k (\phi_{\rho_*}(k)+\tfrac{1}{\phi_{\rho_*}(k)})+\sin q (\phi_{\rho_*}(q)+\tfrac{1}{\phi_{\rho_*}(q)}) \right)\\
         &\qquad\qquad -\frac{1}{\pi}\int_0^\pi f_2(k,p) \frac{\sin p}{\phi_{\rho_*}(p)+\tfrac{1}{\phi_{\rho_*}(p)}}f_2(p,q)\D{p}=0\,.
    \end{aligned}
\end{equation}
while for $n>2$ we obtain the linear equation on $f_n$
\begin{equation}\label{uniquesol}
    \begin{aligned}
     &f_n(q_1,...,q_n) \sum_{i=1}^n \left(\phi_{\rho_*}(q_i)+\tfrac{1}{\phi_{\rho_*}(q_i)} \right)\sin q_i-\frac{1}{\pi}\sum_{i=1}^n\int_0^\pi  \frac{f_2(k,q_i)\sin k}{\phi_{\rho_*}(k)+\tfrac{1}{\phi_{\rho_*}(k)}} f_n(q_1,...,\underbrace{k}_{i\text{th}},...,q_n) \D{k} \\
     &=G_{f_2,...,f_{n-1}}(q_1,...,q_n)\,,
    \end{aligned}
\end{equation}
with some function $G_{f_2,...,f_{n-1}}$ that depends on $f_2,...,f_{n-1}$ and that is symmetric in $q_1,...,q_n$.

As a quadratic equation in $f_2$, \eqref{ricatti} has many solutions. To go further, let us express the amplitudes $|\langle \pmb{\bar{k}}|\phi\rangle|^2$ in terms of $f_2$ at next-to-leading order in $\delta\rho$. We consider a density $\rho\neq \rho_*$ and a state $|\pmb{\bar{k}}\rangle_t$ with density $t\rho+(1-t)\rho_*$, and define $b_t$ as in \eqref{b}, namely the number such that
\begin{equation}
    |{}_t\langle \pmb{\bar{k}}| \phi\rangle|^2\sim e^{L b_t}\,.
\end{equation}
From \eqref{edbrho}, one finds
\begin{equation}
    \partial_t b_t=2\int_0^\pi \delta\rho(k)\log |\phi_{t\rho+(1-t)\rho_*}(k)|\D{k}\,,
\end{equation}
which is, at order $\delta \rho^2$
\begin{equation}
     \partial_t b_t=2\int_0^\pi\delta\rho(k)\log |\phi_{\rho_*}(k)|\D{k}+2t\int_0^\pi\int_0^\pi \delta\rho(k)\delta\rho(q)\Re f_2(k,q)\D{k}\D{q}\,.
\end{equation}
Hence, integrating between $0$ and $1$, we have $|\langle \pmb{\bar{k}}| \phi\rangle|^2\sim e^{L b[\rho]}$ with
\begin{equation}
    b[\rho]-b[\rho_*]=2\int_0^\pi \delta\rho(k)\log |\phi_{\rho_*}(k)|\D{k}+\int_0^\pi\int_0^\pi \delta\rho(k)\delta\rho(q)\Re f_2(k,q)\D{k}\D{q}+\mathcal{O}(\delta\rho^3)\,.
\end{equation}
Taking into account the entropy factor $e^{LS[\rho]}$, one finds that the states with density $\rho$ contribute to order $e^{L w[\rho]}$ with
\begin{equation}\label{negativity}
    w[\rho]-w[\rho_*]=-\frac{1}{2}\int_0^\pi\int_0^\pi \delta\rho(k)\delta\rho(q)\mathcal{H}(k,q)\D{k}\D{q}+\mathcal{O}(\delta\rho ^3)\,.
\end{equation}
where
\begin{equation}\label{hessi}
    \mathcal{H}(p,q)=\frac{\delta(k-q)}{\rho_*(k)(1-2\pi\rho_*(k))}-2\Re f_2(k,q)\,.
\end{equation}
We recall from Section \ref{domi} that the density $\rho_*$ should be a \emph{maximum} of $w[\rho]$, whereas the equation \eqref{star} only imposes an extremality condition. This means that for the construction of Section \ref{densityre} to hold, there should be a solution $f_2$ to \eqref{ricatti} such that $\mathcal{H}$ is positive definite. To investigate this, we will work with the assumption
\begin{equation}\label{14}
    0\leq \nu(k)\leq \frac{1}{4\pi}\,,
\end{equation}
which is indeed satisfied for zero and finite temperature equilibrium, see Section \ref{thermo}. We find the following criterion
\begin{prop}\label{maximumcondition}
Under the assumption \eqref{14}, there is a solution $f_2$ to \eqref{ricatti} such that \eqref{hessi} is positive definite, if and only if
 \begin{equation}
 \begin{aligned}
 & \lambda\geq 0 \,\,{\rm and}\,\,x\,\,{\rm local}\,\,{ \rm maximum} \,\,{\rm of }\,\,\mathcal{E}(\nu)\\
       {\rm or }\qquad &\lambda<0 \,\,{\rm and}\,\,x\,\,{\rm local}\,\,{ \rm minimum} \,\,{\rm of }\,\,\mathcal{E}(\nu)\,.
 \end{aligned}
 \end{equation}
 Moreover, if it exists, the solution $f_2$ is unique.
\end{prop}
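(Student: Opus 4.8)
The plan is to turn the integral Riccati equation \eqref{ricatti} into an operator equation in which the positive‑definiteness of \eqref{hessi} is transparent, reduce the problem to an ``operator square root with respect to an indefinite weight'', and identify the resulting scalar condition with $\partial_x^2\mathcal{E}(\nu)$. I would first introduce on $L^2([0,\pi])$ the multiplication operators $D$ and $M$ with symbols $g(k):=\sin k\,\big(\phi_{\rho_*}(k)+1/\phi_{\rho_*}(k)\big)$ and $\mu(k):=\tfrac1\pi\sin k/\big(\phi_{\rho_*}(k)+1/\phi_{\rho_*}(k)\big)$, together with the rank‑one operator $|\mathbf 1\rangle\langle\mathbf 1|:u\mapsto\big(\int_0^\pi u\big)\mathbf 1$, so that \eqref{ricatti} reads $Df_2+f_2D-f_2Mf_2+16\lambda|\mathbf 1\rangle\langle\mathbf 1|=0$. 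Since \eqref{phi} makes $\phi_{\rho_*}$ real, splitting \eqref{ricatti} into real and imaginary parts and imposing $\mathcal H\succ0$ forces $\Im f_2=0$ (the imaginary part becomes an anticommutator $\{RM,\Im f_2\}=0$ with $R\succ0$). Taking $f_2$ real and symmetric, set $R:=DM^{-1}-f_2$. By \eqref{star} the $\delta$‑term of \eqref{hessi} equals $2\pi\big(\phi_{\rho_*}(k)+1/\phi_{\rho_*}(k)\big)^2\delta(k-q)=2\,(DM^{-1})(k,q)$, so $\mathcal H=2R$ and \eqref{hessi} is positive definite if and only if $R\succ0$. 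Substituting $f_2=DM^{-1}-R$ into \eqref{ricatti} and using that $D$ and $M$ commute collapses all cross terms, leaving
\begin{equation}
RMR=D^2M^{-1}+16\lambda|\mathbf 1\rangle\langle\mathbf 1|=:N\,,
\end{equation}
so the statement reduces to: for which $\lambda,x,\nu$ does this equation have a positive‑definite solution, and is it unique?

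Next I would diagonalize the weight. By \eqref{phi}, $g(k)=\pm2\varepsilon_x(k)$ (with $+$ off the excitations and $-$ on them, as $\phi_-=-1/\phi_+$), hence $M=J|M|$ where $J=\sign M$ is the signature operator ($+1$ off the excitations, $-1$ on them). Setting $\hat R:=|M|^{1/2}R|M|^{1/2}$ (bounded with bounded inverse, and $\hat R\succ0\iff R\succ0$) turns the equation into
\begin{equation}
\hat R\,J\,\hat R=\hat N:=|M|^{1/2}N|M|^{1/2}=JP+16\lambda|w\rangle\langle w|=P^{1/2}\hat J\,P^{1/2}\,,
\end{equation}
where $P:=4\,\mathrm{diag}(\varepsilon_x(k)^2)\succ0$, $w(k):=|\mu(k)|^{1/2}=\sin k/\sqrt{2\pi\varepsilon_x(k)}$, $\tilde w:=P^{-1/2}w$, and $\hat J:=J+16\lambda|\tilde w\rangle\langle\tilde w|$; the last equality is a genuine congruence. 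Hence $\hat N$ and $\hat J$ have the same inertia, and since $\hat J$ is a rank‑one perturbation of $J$, the Sherman--Morrison/Schur formula shows (for either sign of $\lambda$) that the inertia of $\hat J$ equals that of $J$ if and only if
\begin{equation}
\beta:=1+16\lambda\,\langle\tilde w,J\tilde w\rangle>0\,.
\end{equation}

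The heart of the proof is the solvability of $\hat R J\hat R=\hat N$. When there are no excitations ($\nu\equiv0$, so $J=I$, $M\succ0$) this is $\hat R^2=\hat N$, which has a positive solution if and only if $\hat N=P^{1/2}\big(I+16\lambda|\tilde w\rangle\langle\tilde w|\big)P^{1/2}\succ0$, i.e.\ if and only if $\beta>0$, and it is then the unique principal square root. When excitations are present $J$ is indefinite, $\hat R J\hat R$ is a congruence of $J$, and $\beta>0$ (equality of inertiae) is necessary; the hard part is to show that under the standing assumption \eqref{14} it is also sufficient and the positive solution is unique. Assumption \eqref{14} is exactly what does this work: it gives $\langle\tilde w,J\tilde w\rangle\ge0$ (the excitations are a minority at every $k$), so for $\lambda\ge0$ the criterion $\beta>0$ is automatic, and, more substantively, it prevents the rank‑one term from over‑rotating the two bands of $J$, which is what would otherwise obstruct solvability or create the spurious second solution coming from the $J$‑unitary freedom $\hat R_2=\hat R_1T$, $T^{*}JT=J$. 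Concretely I would block $\hat R$ and $\hat N$ with respect to $J=\Pi_+-\Pi_-$ and solve the resulting system by Schur complements, the lower‑right complement being invertible precisely because of \eqref{14} and $\beta>0$; alternatively one can anchor at $\lambda=0$ (where $\hat R=P^{1/2}$ uniquely) and continue in $\lambda$ as long as $\beta>0$, or argue through algebraic‑Riccati theory, $R$ being the Hermitian solution attached to the stable invariant subspace of $\left(\begin{smallmatrix}D&-M\\-16\lambda|\mathbf 1\rangle\langle\mathbf 1|&-D\end{smallmatrix}\right)$, which is the unique such solution once this operator has no imaginary spectrum ($\iff\beta\neq0$) and, among all Hermitian solutions, the only one with $\mathcal H\succ0$ precisely when $\beta>0$. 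One must also track the infinite‑dimensional subtleties — $D^2M^{-1}$ is unbounded near $k=0,\pi$, but $\hat R$ and $\hat N$ are bounded and the integral below converges, so all manipulations are legitimate.

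Finally I would identify $\beta$ with the stated dichotomy. Using $\partial_x^2\varepsilon_x(k)=\sin^2k/\varepsilon_x(k)^3$, a short computation gives $\langle\tilde w,J\tilde w\rangle=\tfrac1{8\pi}\int_0^\pi(1-4\pi\nu(k))\,\partial_x^2\varepsilon_x(k)\,\D k$ (the factor $1-4\pi\nu$ being the net signed weight of the two bands at each $k$, consistent with the value $\|\tilde w\|^2$ at $\nu\equiv0$). Combined with $\partial_x^2\mathcal E(\nu)=-\tfrac1\pi\int_0^\pi(1-4\pi\nu)\,\partial_x^2\varepsilon_x\,\D k-\tfrac1{2\lambda}$ read off from \eqref{enu}, this yields $\beta=-2\lambda\,\partial_x^2\mathcal E(\nu)$. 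Since by \eqref{D0} the relevant $x$ already satisfies $\partial_x\mathcal E(\nu)=0$, its second‑derivative test classifies it as a local maximum or minimum, and $\beta>0\iff\lambda\,\partial_x^2\mathcal E(\nu)<0$, which is exactly ``$\lambda\ge0$ and $x$ a local maximum of $\mathcal E(\nu)$, or $\lambda<0$ and $x$ a local minimum'' (with $\lambda=0$ the trivial case $f_2=0$, $\hat R=P^{1/2}$). Together with the uniqueness established above this proves the statement; the single genuinely hard step is, as flagged, the indefinite‑weight square root of the third paragraph — for zero temperature ($\nu\equiv0$) it degenerates to an elementary positive‑square‑root statement, and the content of the Property for thermal or excited states is precisely that \eqref{14} tames the indefiniteness of the weight $M$.
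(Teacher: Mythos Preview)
Your proposal is correct and lands on the same scalar criterion as the paper, $\beta=-2\lambda\,\partial_x^2\mathcal E(\nu)>0$, but the route is parameterized differently. The paper makes the substitution $f_2(k,q)=-g(k,q)\tfrac{\phi_{\rho_*}(q)+1/\phi_{\rho_*}(q)}{\sin q}+\pi(\phi_{\rho_*}(q)+1/\phi_{\rho_*}(q))^2\delta(k-q)$, which turns \eqref{ricatti} into the ordinary square--root problem $g^2=G_1$ with $G_t=4\pi^2\varepsilon_x^2\,\delta+8\pi\lambda t\,\mathfrak s\sin^2/\varepsilon_x$: the diagonal is now $4\pi^2\varepsilon_x^2$ regardless of the $\pm$ sign, so the ``indefinite weight'' you carry around as $J$ has been squared away. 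Positivity of $G_t$ for all $t\in[0,1]$ is then checked via the matrix--determinant lemma, giving exactly your $\beta$ at $z=0$ and using \eqref{14} only to bound the rank--one integral; uniqueness comes from the elementary homotopy $t:0\to1$ (at $t=0$ the only sign choice with $\mathcal H_0\succ0$ is the all--plus one, and no eigenvalue of $G_t$ crosses zero along the way). This is your ``continue from $\lambda=0$'' option, executed in the $g$--variable where it becomes a standard positive square root rather than a $J$--square root.

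What your parameterization buys is a cleaner bookkeeping of the Hessian ($\mathcal H=2R$ on the nose) and a transparent link to algebraic Riccati theory, at the price of having to solve $\hat R J\hat R=\hat N$ in an indefinite metric. What the paper's buys is that the sufficiency step (the one you flag as hard) becomes a one--line statement about positive square roots of positive operators, with no Krein--space or Schur--complement machinery needed. Two small points: your argument that $\Im f_2=0$ via ``$\{RM,\Im f_2\}=0$ with $R\succ0$'' is not quite right as stated (the imaginary part of the Riccati equation is $(D-f_RM)f_I+f_I(D-Mf_R)=0$, which is not an anticommutator with $RM$ unless $M$ and $f_R$ commute), though the conclusion is correct since the paper simply works with real $g$ from the start; and you should note that for the necessity direction the paper shows directly that if $\beta\le0$ then $G_1$ acquires a non--positive eigenvalue, so any square root has an imaginary eigenvalue with real eigenvector, forcing $\Re g$ (hence $\mathcal H$) to be singular.
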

This Property is proven in Appendix \ref{negative}. We note that in the case \eqref{14}, when $\lambda\geq 0$ a solution to $\partial_x \mathcal{E}(\nu)=0$ always satisfies $\partial_x^2 \mathcal{E}(\nu)<0$, so the first possibility of Property \ref{maximumcondition} could be simplified to $\lambda\geq 0$. We also remind the reader that the mean-field equation on the effective magnetic field $x$ derived in Section \ref{meanfieldsec} is only that $x$ is an extremum of $\mathcal{E}(\nu)$. We see here that the maximality of $\rho_*$ refines this condition and imposes a maximality/minimality condition on $x$. \\

Let us finally briefly comment on physical situations that will involve the full functional $\phi_\rho$. Although the values of $\phi_\rho$ for $\rho\neq\rho_*$ do not play any role in thermodynamic properties, they will appear in broader problems like out-of-equilibrium physics. Indeed, when decomposing a state in terms of the eigenstates of the model, the overlaps are not necessarily dominated by the dominant density $\rho_*$. For example, time-evolving $|0\rangle$ with $H(h,\lambda)$, the overlaps between the initial state and the eigenstates depend only on $|\langle 0|\phi\rangle|^2$, and not on $|\langle \pmb{\bar{k}}|\phi\rangle|^2$ for $\pmb{k}$ with density $\rho_*$.

\subsection{Numerical checks \label{nmericalpo}}

In Figure \ref{numerics} we provide numerical checks of the exact spectrum of \eqref{hamil} in the thermodynamic limit given in Section \ref{meanfieldsec}. Specifically, we compute numerically the ground state energy density for several parameters and compare it to \eqref{enu} for $\nu=0$ (see below in Section \ref{phas}). We observe excellent agreement, with a relative precision of order $10^{-4}$.

\begin{figure}[h]
\begin{center}
\begin{tikzpicture}[scale=0.8]
\begin{axis}[
    axis lines=middle,
    xmin=0, xmax=0.08,
    ymin=-0.02, ymax=0.07,
    xlabel={$1/L$},
    ylabel={$\Delta E$}]
]
\addplot [only marks,blue] table {
0.0714286 -0.0129544
0.0625 -0.0112187
0.0555556 -0.00990003
0.05 -0.00885282
};
\addplot [domain=0:0.1, samples=2, dashed] {0.000728156 - 0.191406 *x};
\addplot [only marks,blue] table {
0.0714286 0.0355862
0.0625 0.0312876
0.0555556 0.0279062
0.05 0.0251798
};
\addplot [domain=0:0.1, samples=2, dashed] {0.000920509 + 0.485534 *x};
\addplot [only marks,blue] table {
0.0714286 -0.00793286
0.0625 -0.0068579
0.0555556 -0.00604162
0.05 -0.00540354
};
\addplot [domain=0:0.1, samples=2, dashed] {0.000512751 - 0.118116 *x};
\addplot [only marks,blue] table {
0.0714286 0.0121656
0.0625 0.0106565
0.0555556 0.00947634
0.05 0.00853219
};
\addplot [domain=0:0.1, samples=2, dashed] {0.0000551167 + 0.169574 *x};
\addplot [only marks,blue] table {
0.0714286 0.0588761
0.0625 0.0515455
0.0555556 0.0458371
0.05 0.0412675
};
\addplot [domain=0:0.1, samples=2, dashed] {0.000184076 + 0.821724* x};
\end{axis}
\end{tikzpicture}
\caption{Plot of $\Delta E=(E_{\rm mes}-E_{\rm th})/E_{\rm th}$ as a function of $1/L$, with $E_{\rm mes}$ the measured ground state energy in size $L$ and $E_{\rm th}$ the theoretical result for $L\to\infty$. From top to bottom, the parameters used are $(h,\lambda)=(-0.5,-3)$, $(0.5,-1)$, $(2,-0.5)$, $(2,0.5)$, $(0.5,1)$. The dashed lines are simple linear fits on the values plotted.} 
\label {numerics}
\end{center}
\end {figure}
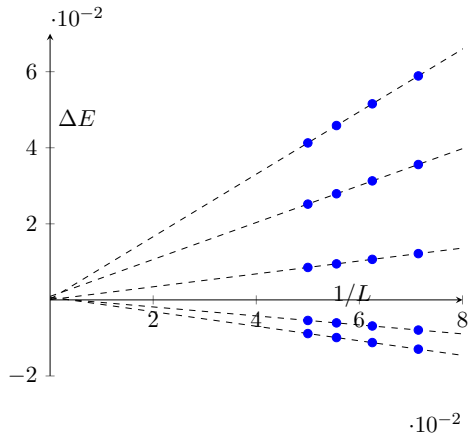

\subsection{Non-empty set of single momenta $\pmb{s}$ \label{notsingle}}
As promised in Section \ref{pairset}, let us come back to the case where $\pmb{s}$ the set of single momenta that parametrizes different sectors of $H(h,\lambda)$ is not empty $\pmb{s}\neq \emptyset$. In this case, we define coherent states $|\phi\rangle$ for $\phi(q)$ a function of $q\in K_+^{\pmb{s}}$ as
\begin{equation}
    |\phi\rangle=A \sum_{\pmb{k}\subset K_+^{\pmb{s}}} \left(\prod_{k\in\pmb{k}} i\phi(k) \right) |\pmb{\bar{k}} \cup \pmb{s}\rangle\,.
\end{equation}
They satisfy as well the factorization property
\begin{equation}\label{factor5}
    \langle \pmb{\bar{k}}\cup\{q,-q\}\cup\pmb{s}| \phi\rangle=i\phi(q)  \langle \pmb{\bar{k}}\cup\pmb{s}| \phi\rangle\,,
\end{equation}
for any $q\in K_+^{\pmb{s}}\setminus \pmb{k}$. By changing $K_+$ into $K_+^{\pmb{s}}$, it is straightforward to adapt the derivation of Section \ref{free} to show that eigenstates of the TFIM $H(h,0)$ can be found under this form. Specifically, we have
\begin{equation}
 \langle\pmb{\bar{k}}\cup\pmb{s}|H(h,0)|\phi\rangle=E(\pmb{k})\langle\pmb{\bar{k}}\cup\pmb{s}|\phi\rangle
\end{equation}
with
\begin{equation}
E(\pmb{k})=hL-2\sum_{s\in\pmb{s}}\cos s-2\sum_{k\in K_+^{\pmb{s}}}\phi(k)\sin k +2\sum_{q\in\pmb{k}}[-2h-2\cos q+(\phi(q)-\tfrac{1}{\phi(q)})\sin q ]\,,
\end{equation}
and imposing independence from $\pmb{k}$ the same condition \eqref{candidate} follows.

%As a check, let us count the number of eigenstates obtained this way. Denoting $n=|\pmb{s}|$ for $n=0,2,...,L/2$ even, there are $2^{n} {L/2\choose n }$ choices of $\pmb{s}$. Then for each of the $L/2-n$ momenta in $K_+^{\pmb{s}}$, there are $2$ choices of sign in \eqref{phi0}. Hence in total this yields all the
%\begin{equation}
%    \sum_{\substack{n=0\\ n\text{ even}}}^{L/2} 2^{L/2-n}2^n {L/2\choose n}=2^{L-1}
%\end{equation}
%eigenstates of $H(h,0)$ in the $S=1$ sector.

As in the case $\pmb{s}=\emptyset$, eigenstates of $H(h,\lambda)$ can be found by promoting $\phi(q)$ in \eqref{factor5} into a functional $\phi_{\rho}(q)$ of the density $\rho$ of momenta $\pmb{k}$. We introduce as well $\sigma(k)$ the density of single momenta in $\pmb{s}$, defined for $-\pi<k<\pi$. The candidate energy density $e(\rho)$ of \eqref{ee} becomes then
\begin{equation}\label{ee2}
\begin{aligned}
e(\rho)=&h-2\int_{-\pi}^\pi  \sigma(k)\cos k\D{k}-\frac{1}{\pi}\int_0^\pi (1-2\pi(\sigma(k)+\sigma(-k)))\phi_\rho(k)\sin k\D{k}\\
&+2\int_0^\pi \rho(k)\left[-2h-2\cos k+(\phi_{\rho}(k)-\tfrac{1}{\phi_{\rho}(k)})\sin k \right]\D{k}\\
&+\lambda\left(1-2\int_{-\pi}^\pi \sigma(k)\D{k}-4\int_0^\pi\rho(k)\D{k} \right)^2\,.
\end{aligned}
\end{equation}
Proceeding as before we obtain the following. We introduce for $-\pi<k<\pi$ the density of excitations $0\leq \nu(k)\leq \tfrac{1}{2\pi}$ counting both the density of the $|k|$'s for which the $-$ sign is chosen in \eqref{phi}, as well as the density of $\pmb{s}$. We find that the energy density $\mathcal{E}(\nu)$ reads
\begin{equation}\label{enu2}
\mathcal{E}(\nu)=-\frac{1}{2\pi}\int_{-\pi}^\pi \varepsilon_{x}(k)\left(1-4\pi\nu(k) \right)\D{k}-\frac{(x-h)^2}{4\lambda}
\end{equation}
where $x$ satisfies the equation
\begin{equation}\label{D2}
 x+\frac{\lambda}{\pi}\int_{-\pi}^\pi \partial_x\varepsilon_{x}(k)\left(1-4\pi\nu(k) \right)\D{k}=h\,.
\end{equation}

\section{Thermodynamics \label{thermo}}

\subsection{Phase diagram at zero temperature\label{phas}}
\subsubsection{Ground state energy}
Let us determine the phase diagram of the model at zero temperature, namely the analyticity regions of the ground state energy as a function of $h,\lambda$. With the notations of Section \ref{meanfieldsec}, the set of excitations $\nu$ corresponding to the ground state has to satisfy the local minimum conditions
\begin{equation}\label{minimum}
 \begin{cases}
\partial_{\nu(k)}\mathcal{E}(\nu)\geq 0 \qquad \text{if }\nu(k)=0\\
\partial_{\nu(k)}\mathcal{E}(\nu)\leq 0 \qquad \text{if }\nu(k)=\frac{1}{2\pi}\\
\partial_{\nu(k)}\mathcal{E}(\nu)=0\qquad \text{if }0<\nu(k)<\frac{1}{2\pi}
\end{cases}\,.
\end{equation}
From \eqref{enu} and \eqref{D0}, we find
\begin{equation}\label{denu}
    \partial_{\nu(k)}\mathcal{E}(\nu)= 4\varepsilon_{x}(k)\,,%-\frac{8\lambda}{\pi}\frac{\partial_x \varepsilon_x(k) \int_0^\pi\partial_x \varepsilon_x(q)\nu(q)\D{q}}{1+\tfrac{2\lambda}{\pi}\int_0^\pi \partial_x^2\varepsilon_x(q)\left(1-4\pi\nu(q) \right)\D{q}}\,.
\end{equation}
which is strictly positive for $0<k<\pi$. Hence the ground state energy density is obtained with $\nu=0$ for all parameters $h,\lambda$. So it is
\begin{equation}
   F(x)=-\frac{1}{\pi}\int_0^\pi \varepsilon_x(k)\D{k}-\frac{(x-h)^2}{4\lambda}\,.
\end{equation}
where $x$ satisfies $F'(x)=0$. In case of multiple solutions to $F'(x)=0$, the ground state is given by the lowest $F(x)$. One notes, however, that if there is only one $x$ that satisfies $F'(x)=0$, it can a priori be the maximum of $F$ and not the minimum.

As a function of $x$, $F(x)$ is non analytical only when $x=\pm 1$. Hence the phase transitions of the model are located either (i) at $(h,\lambda)$ such that $x=\pm 1$, or (ii) when $x$ has a non-analyticity as a function of $h,\lambda$. To go further, we need to study the solutions to $F'(x)=0$.

\subsubsection{Case $\lambda\geq 0$}
Let us consider first $\lambda\geq 0$. We have
\begin{equation}
    \begin{aligned}
    F'(x)&=-\frac{1}{\pi}\int_0^\pi \frac{x+\cos k}{\sqrt{x+\cos k)^2+\sin^2 k}}\D{k}-\frac{x-h}{2\lambda}\,.\\
    F''(x)&=-\frac{1}{\pi}\int_0^\pi \frac{\sin^2 k}{((x+\cos k)^2+\sin^2 k)^{3/2}}\D{k}-\frac{1}{2\lambda}\,.
    \end{aligned}
\end{equation}
We see that $F''(x)<0$ for all $x$, and $F'(\pm \infty)\to \mp\infty$, hence there is a unique solution to the equation $F'(x)=0$. It follows that  when $\lambda\geq 0$, the only possible phase transitions occur when $x=\pm 1$. Since $F'(\pm 1)=\mp \tfrac{2}{\pi}-\tfrac{\pm 1-h}{2\lambda}$, we conclude that there are two critical lines for $\lambda\geq 0$ given by
\begin{equation}
    h\mp \frac{4\lambda}{\pi}=\pm 1\,.
\end{equation}
These phase transitions are not associated to discontinuous changes of eigenstates, so they are continuous (second-order) phase transitions. The order parameter is the magnetization $\langle \sigma^z\rangle$ as in the usual TFIM. It is non-zero for $-1-\frac{4\lambda}{\pi}<h<1+\frac{4\lambda}{\pi}\equiv h_c$ and zero for $h$ outside this interval. In the TFIM, the magnetization behaves as $\langle \sigma^z\rangle\sim \delta h^{1/8}$ at the transition at $h=1+\delta  h$. Here, for $\lambda>0$, this critical behaviour is marginally modified by the singular behaviour of the effective magnetic field $x$ at the transition. Indeed, around $x=1+\delta x$ we have
\begin{equation}\label{fx0}
    F'(x)=\frac{h}{2\lambda}-\frac{1}{2\lambda}-\frac{2}{\pi}+\frac{1}{\pi}\delta x\log |\delta x|+\mathcal{O}(\delta x)\,.
\end{equation}
Hence for $h=1+\frac{4\lambda}{\pi}+\delta h$, the unique solution to $F'(x)=0$ behaves as
\begin{equation}
    x=1-\frac{\pi}{2\lambda}\frac{\delta h}{\log |\delta h|}+o\left(\frac{\delta h}{\log|\delta h|}\right)\,.
\end{equation}
It follows that we have a behaviour of the magnetization at $\lambda>0$ that is marginally corrected compared to $\lambda=0$ as
\begin{equation}
    \langle \sigma^z\rangle\sim \left( \frac{h_c-h}{\log (h_c-h)}\right)^{1/8}\,.
\end{equation}
We note that this kind of \emph{multiplicative} logarithmic corrections generally arises when marginal operators perturb the field theory describing a critical point \cite{affleck1989critical,eggert1996numerical}.

\subsubsection{Case $\lambda< 0$ \label{negativelam}}
Let us now consider the case $\lambda<0$. It is convenient to introduce
\begin{equation}
    \delta F'(x)=F'(x)+\frac{h}{2|\lambda|}\,,
\end{equation}
which is independent of $h$. This way, the solutions to $F'(x)=0$ correspond to intersections of the graph of the function $\delta F'(x)$ with the horizontal lines $\tfrac{h}{2|\lambda|}$. This naturally suggests to study the phase diagram of the model by fixing $\lambda<0$ and varying $h$ from $-\infty$ to $+\infty$. To study the solutions to $\delta F'(x)=\tfrac{h}{2|\lambda|}$ as a function of $h$, we need more information on the behaviour of this function. As shown in Appendix \ref{signf3}, $F'''(x)$ is odd and we have
\begin{equation}\label{signff}
 \begin{cases}
    F'''(x) <0 \qquad \text{for }0<x<1\\
    F'''(x) >0 \qquad \text{for }x>1\,.
   \end{cases}
\end{equation}
Hence $F''(x)$ is decreasing for $0<x<1$, goes to $-\infty$ at $x=1$, is increasing for $x>1$, and goes to $-\frac{1}{2\lambda}>0$ when $x\to\infty$. Given that $F''(x)$ is even, we conclude that $F''(x)$ changes sign two times from $x=-\infty$ to $+\infty$ if $F''(0)<0$, and four times if $F''(0)>0$. It follows that the equation $F'(x)=0$ has at most three solutions if $F''(0)<0$, and at most five if $F''(0)>0$. If one of these solutions $x$ satisfies $F''(x)<0$, then since $F(x)\to +\infty$ when $x\to\pm\infty$, there has to be another solution $x'<x$ with $F(x')<F(x)$, and so $x$ cannot correspond to the ground state. Hence there are only at most two possible values of $x$ when $F''(0)<0$, and at most three when $F''(0)>0$. When $F''(0)<0$ there is a unique interval of values of $x$ (specifically, where $F''(x)<0$) that can never correspond to the ground state for any value of $h$, and when $F''(0)>0$ there are two such intervals.\\

Now, since $F''(x)\to -\frac{1}{2\lambda}>0$ when $x\to\infty$, the function $\delta F'(x)$ is strictly increasing for $|x|$ sufficiently large, and is not divergent anywhere on the real line, so for $|h|$ large enough the equation $\delta F'(x)=\frac{h}{2|\lambda|}$ has only one solution $x$ that behaves as $x\to \pm\infty$ when $h\to\pm\infty$. Hence, because of the intervals of values of $x$ that are excluded, there are either one or two discontinuous changes of $x$ as $h$ goes from $-\infty$ to $\infty$ at fixed $\lambda$. These changes precisely correspond to first-order phase transitions. If $F''(0)<0$ there is only one excluded interval and so there is exactly one phase transition. If $F''(0)>0$ there are two excluded intervals that do not contain $0$. By symmetry, the ground state has a value $x$ between the two excluded intervals if and only if the ground state is obtained for $x=0$ at $h=0$. Hence, summarizing, there is exactly one phase transition from $h=-\infty$ to $h=\infty$ if $x=0$ is not the minimum of $F(x)$ at $h=0$, and this phase transition occurs at $h=0$. There are exactly two phase transitions from $h=-\infty$ to $h=\infty$ if $x=0$ is the minimum of $F(x)$ at $h=0$, and they occur at some values of magnetic field $\pm h_0(\lambda)$. Let us investigate quantitatively these conditions as a function of $\lambda$. We have
\begin{equation}
    F''(0)=-\frac{1}{2}-\frac{1}{2\lambda}\,,
\end{equation}
so at least for $\lambda<-1$ we have $F''(0)<0$, implying only one phase transition. For $0<\lambda<-1$ one has $F''(0)>0$, so that the criterion to distinguish between one or two phase transitions is whether $F(0)=-1$ is the minimum of the function $F(x)$ at $h=0$. This condition means
\begin{equation}
    \forall x,\quad -\frac{1}{\pi}\int_0^\pi \sqrt{1+x^2+2x\cos k}\D{k}-\frac{x^2}{4\lambda}\geq -1\,,
\end{equation}
which is equivalent to
\begin{equation}
    \forall x,\quad \lambda \geq \frac{1}{4}\frac{x^2}{1-\tfrac{1}{\pi}\int_0^\pi \sqrt{1+x^2+2x\cos k}\D{k}}\,.
\end{equation}
Hence, denoting
\begin{equation}
    \lambda_c= -\frac{1}{4}\min_{x\in\mathbb{R}} \frac{x^2}{\tfrac{1}{\pi}\int_0^\pi \sqrt{1+x^2+2x\cos k}\D{k}-1}\,,
\end{equation}
there is only one phase transition in $h$ for $\lambda<\lambda_c$, occurring at $h=0$, and there are two phase transitions in $h$ for $\lambda>\lambda_c$. Numerically, this critical value is
\begin{equation}\label{lambdac}
    \lambda_c=-0.836584...\,.
\end{equation}
Finally, we remark that when $\lambda<0$ there cannot be any phase transitions  coming from the non-analyticity of $F(x)$ at $x=\pm 1$. Indeed, we have $F''(\pm 1)\to-\infty$, so $x=\pm 1$ cannot be a minimum, so  $x=\pm 1$ is never the effective magnetic field corresponding to the ground state.\\

From these analytic considerations, we obtain the following phase diagram in Figure \ref{phasediagram}. The different phases mentioned in Figures \ref{phasediagram} and \ref{phasediagramt} are
\begin{itemize}
    \item Phase I: ordered phase for $\sigma^z$: $\langle \sigma^z\rangle\neq 0$. Paramagnetic phase for $\sigma^x$: $\langle \sigma^x\rangle \neq 0$ increasing odd function of $h$.
    \item Phase II$\pm$: disordered phase for $\sigma^z$: $\langle \sigma^z\rangle=0$. Paramagnetic phase for $\sigma^x$: $\langle \sigma^x\rangle \neq 0$ increasing odd function of $h$.
    \item Phase II: disordered phase for $\sigma^z$: $\langle \sigma^z\rangle=0$. Ordered phase for $\sigma^x$: $\langle \sigma^x\rangle\neq 0$.
    \item Phase III: disordered phase for $\sigma^z$: $\langle \sigma^z\rangle=0$, and for $\sigma^x$: $\langle \sigma^x\rangle=0$.
\end{itemize}
Moreover, we provide in Figure \ref{numerics2} a numerical comparison of the phase transition for $\langle \sigma^x_j\rangle$ as a function of $h$ at zero temperature.

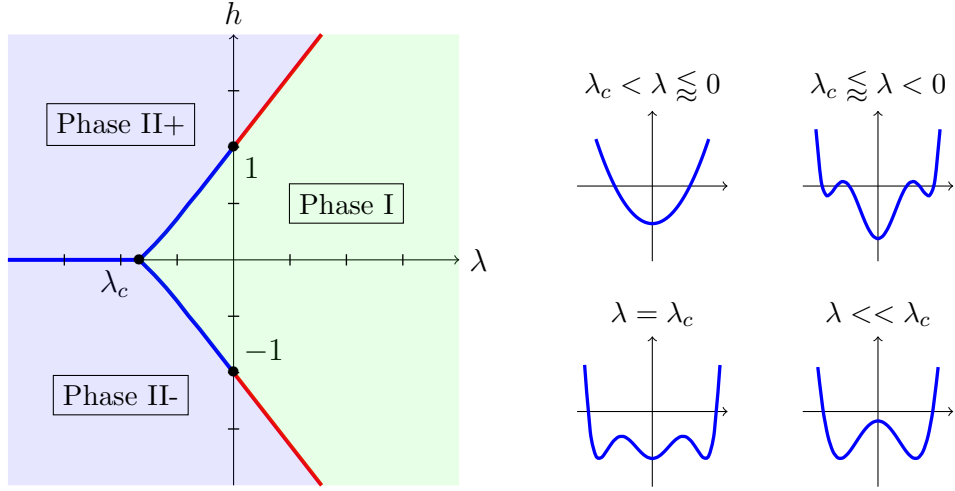
\begin{figure}[h]
\begin{center}
\begin{tikzpicture}[scale=1.5]
\coordinate (A) at (-0.8365,0);
\draw[->] (-2,0) -- (2,0) node[right]{\large $\lambda$};
\draw[->] (0,-2) -- (0,2) node[above]{\large $h$};
\draw[red, line width=1.5pt] (0,-1)--(0.78,-2);
\draw[red, line width=1.5pt] (0,1)--(0.78,2);
\draw node[above right] at (0,-1) {\large $-1$};
\draw node[below right] at (0,1) {\large $1$};
\draw node[below left] at (A) {\large $\lambda_c$};
\draw[blue, line width=1.5pt] (-2,0)--(A);
 \draw[blue, line width=1.5pt] (0,1) -- (-0.3,0.615) -- (-0.4,0.495)-- (-0.5,0.365) -- (-0.6,0.245)--(-0.7,0.135)--(-0.8,0.035)--(A);
 \draw[blue, line width=1.5pt] (0,-1) -- (-0.3,-0.615) -- (-0.4,-0.495)-- (-0.5,-0.365) -- (-0.6,-0.245)--(-0.7,-0.135)--(-0.8,-0.035)--(A);
 \fill[opacity=0.1, green] (2,-2)--(0.78,-2)--(0,-1).. controls (-0.39,-0.5)..(A)..controls (-0.39,0.5)..(0,1)--(0.78,2)--(2,2);
 \fill[opacity=0.1, blue] (0.78,-2)--(0,-1).. controls (-0.39,-0.5)..(A)..controls (-0.39,0.5)..(0,1)--(0.78,2)--(-2,2)--(-2,-2);
\node[draw] at (1,0.5)   (a) {Phase I};
\node[draw] at (-1,1.2)   (a) {Phase II+};
\node[draw] at (-1,-1.2)   (a) {Phase II-};
 \draw[black, line width=0.5pt] (0.5,-0.05) -- (0.5,0.05);
 \draw[black, line width=0.5pt] (1,-0.05) -- (1,0.05);
 \draw[black, line width=0.5pt] (1.5,-0.05) -- (1.5,0.05);
 \draw[black, line width=0.5pt] (-0.5,-0.05) -- (-0.5,0.05);
 \draw[black, line width=0.5pt] (-1,-0.05) -- (-1,0.05);
 \draw[black, line width=0.5pt] (-1.5,-0.05) -- (-1.5,0.05);
 \draw[black, line width=0.5pt] (-0.05,0.5) -- (0.05,0.5);
 \draw[black, line width=0.5pt] (-0.05,1) -- (0.05,1);
 \draw[black, line width=0.5pt] (-0.05,1.5) -- (0.05,1.5);
 \draw[black, line width=0.5pt] (-0.05,-0.5) -- (0.05,-0.5);
 \draw[black, line width=0.5pt] (-0.05,-1) -- (0.05,-1);
 \draw[black, line width=0.5pt] (-0.05,-1.5) -- (0.05,-1.5);
\draw node[black] at (0,1) {\small $\bullet$};
\draw node[black] at (0,-1) {\small $\bullet$};
\draw node[black] at (A) {\small $\bullet$};
%\addplot[
%    line width=1pt,
%    color=red,
%    opacity=1]
%table[
%           x expr=\thisrowno{0}, 
%           y expr=\thisrowno{1},
%         ]{potential.dat};
\end{tikzpicture}
$\qquad$
\begin{tikzpicture}[scale=0.5]
  \draw[->] (-2, 0) -- (2, 0)  ;
  \draw[->]  (0, -2) -- (0, 2) node[above]{$\lambda_c<\lambda\lessapprox 0$};
  \draw[scale=1, domain=-1.5:1.5, smooth, line width=1.25pt, variable=\x, blue] plot ({\x}, {\x*\x-1});
  
  \draw[->] (-2+6, 0) -- (2+6, 0)  ;
  \draw[->]  (0+6, -2) -- (0+6, 2) node[above]{$\lambda_c \lessapprox\lambda<0$};
  \draw[scale=1, domain=-1.65:1.65, smooth, line width=1.25pt, variable=\x, blue] plot ({\x+6}, {0.9*\x*\x*\x*\x*\x*\x-3.6*\x*\x*\x*\x+4.2*\x*\x-1.4});
  
  \draw[->] (-2, 0-6) -- (2, 0-6)  ;
  \draw[->]  (0, -2-6) -- (0, 2-6) node[above]{$\lambda= \lambda_c$};
  \draw[scale=1, domain=-1.8:1.8, smooth, line width=1.25pt, variable=\x, blue] plot ({\x}, {0.5*\x*\x*\x*\x*\x*\x-2*\x*\x*\x*\x+2*\x*\x-1.25-6});
  
  \draw[->] (-2+6, 0-6) -- (2+6, 0-6)  ;
  \draw[->]  (0+6, -2-6) -- (0+6, 2-6) node[above]{$\lambda<< \lambda_c$};
  \draw[scale=1, domain=-1.6:1.6, smooth, line width=1.25pt, variable=\x, blue] plot ({\x+6}, {\x*\x*\x*\x-2*\x*\x-0.25-6});
\end{tikzpicture}
\caption{Left: Phase diagram at zero temperature of the Hamiltonian \eqref{hamil} in the $(\lambda,h)$ plane. Blue thick lines indicate first-order phase transitions and red thick lines second-order phase transitions. The value of $\lambda_c$ is quoted in \eqref{lambdac}. 
Right: Sketches of $F(x)$ as a function of $x$ for $h=0$ and different values of $\lambda$. The ground state value is given by the minimum of the function plotted.} 
\label {phasediagram}
\end{center}
\end {figure}

\begin{figure}[h]
\begin{center}
\begin{tikzpicture}[scale=0.8]
\begin{axis}[
    axis lines=middle,
    xmin=0, xmax=1,
    ymin=-1, ymax=0.,
    xlabel={$h$},
    ylabel={$\langle \sigma^x_j\rangle$}]
]
\addplot [only marks,black] table {
1.0 -0.8827100454688136
0.95 -0.8720951900814143
0.9 -0.8597325728677736
0.85 -0.8453144720312618
0.8 -0.8283885481326254
0.75 -0.8084050453274786
0.7 -0.7846963440456745
0.6499999999999999 -0.7565019435080149
0.6 -0.7229894236949571
0.55 -0.6833735295120141
0.5 -0.6370488291477714
0.44999999999999996 -0.583848671726765
0.3999999999999999 -0.5242771294803756
0.35 -0.4596014163777995
0.29999999999999993 -0.3917211591437025
0.25 -0.3227542279264155
0.19999999999999996 -0.2545171504251236
0.1499999999999999 -0.18812903766117933
0.09999999999999998 -0.12392207345758105
0.04999999999999993 -0.06155250480807978
};
\addplot [only marks,blue] table {
1.0 -0.9091399946406251
0.95 -0.9022198499920162
0.9 -0.8942093019346116
0.85 -0.8848659309561935
0.8 -0.8737899034160846
0.75 -0.8604070982642646
0.7 -0.8438634288484677
0.6499999999999999 -0.8228707867022855
0.6 -0.7954910839444207
0.55 -0.7589105101313907
0.5 -0.7094902834560453
0.44999999999999996 -0.6438892340426032
0.3999999999999999 -0.5621209130501705
0.35 -0.470709123812177
0.29999999999999993 -0.3803279775342503
0.25 -0.29865861101453206
0.19999999999999996 -0.22731076172434328
0.1499999999999999 -0.16429672687643604
0.09999999999999998 -0.10689315609674299
0.04999999999999993 -0.05278628293562392
};
\addplot [only marks,purple] table {
1.0 -0.916080158432943
0.95 -0.9101879781815239
0.9 -0.9034791137306091
0.85 -0.8958033886854369
0.8 -0.8869019191463576
0.75 -0.8763951506681775
0.7 -0.8636927646169716
0.6499999999999999 -0.8478246821014027
0.6 -0.8270832179824441
0.55 -0.7982828240794346
0.5 -0.7553881482617053
0.44999999999999996 -0.6883738909309837
0.3999999999999999 -0.5892923075577047
0.35 -0.4722597235266073
0.29999999999999993 -0.367421226560078
0.25 -0.2845015774435756
0.19999999999999996 -0.21694139388739614
0.1499999999999999 -0.15790978810612905
0.09999999999999998 -0.10342465583913078
0.04999999999999993 -0.05130147363413659
};
\addplot [only marks,red] table {
1.0 -0.9192732807248091
0.95 -0.9137802456349285
0.9 -0.9075649324644921
0.85 -0.900513795281763
0.8 -0.8924266696686621
0.75 -0.8830220477259152
0.7 -0.8718786409217241
0.6499999999999999 -0.8583149182497745
0.6 -0.8411171756601243
0.55 -0.8178321416317555
0.5 -0.7827981355154423
0.44999999999999996 -0.7220968379719049
0.3999999999999999 -0.612214783533658
0.35 -0.470242467735125
0.29999999999999993 -0.35815251329944675
0.25 -0.2782524609657443
0.19999999999999996 -0.21375261182534
0.1499999999999999 -0.15639964983574012
0.09999999999999998 -0.10273055799758263
0.04999999999999993 -0.05102437908766082
};
\addplot [no marks,thick] table {
0.	2.1938700855983484e-13
0.05	-0.04999999999978069
0.1	-0.10099999999978077
0.15000000000000002	-0.1539999999997808
0.2	-0.20899999999978092
0.25	-0.26999999999978097
0.30000000000000004	-0.33899999999978103
0.31	-0.3549999999997811
0.32	-0.3709999999997811
0.32999999999999996	-0.3889999999997812
0.33999999999999997	-0.4079999999997812
0.35	-0.4299999999997812
0.3525	-0.43549999999978123
0.355	-0.44199999999978123
0.3575	-0.44849999999978124
0.36	-0.45499999999978125
0.3625	-0.46149999999978125
0.365	-0.46899999999978126
0.3675	-0.7494999999997687
0.37	-0.7539999999997679
0.3725	-0.7574999999997671
0.375	-0.7609999999997665
0.3775	-0.763499999999766
0.38	-0.7669999999997653
0.39	-0.7769999999997631
0.4	-0.7869999999997609
0.41	-0.794999999999759
0.42	-0.8019999999997571
0.43	-0.8079999999997554
0.44	-0.8139999999997536
0.44999999999999996	-0.8189999999997519
0.45999999999999996	-0.8239999999997503
0.47	-0.8289999999997486
0.48	-0.8339999999997469
0.49	-0.8379999999997454
0.5	-0.8419999999997438
0.55	-0.8589999999997364
0.6000000000000001	-0.8719999999997294
0.65	-0.8829999999997228
0.7000000000000001	-0.8929999999997161
0.75	-0.9009999999997098
0.8	-0.9079999999997035
0.8500000000000001	-0.9139999999996973
0.9	-0.9199999999996912
0.9500000000000001	-0.9249999999996851
1.	-0.9289999999996792
};
\end{axis}
\end{tikzpicture}
\caption{Plot of $\langle \sigma^x_j\rangle$ evaluated in the ground state of the model across a first order phase transition, as a function of $h$ for $\lambda=-0.5$, for different sizes $L=6,10,14,18$ (black, blue, purple, red). The thick line is the theoretical result in the thermodynamic limit, and displays a first-order phase transition.} 
\label{numerics2}
\end{center}
\end {figure}
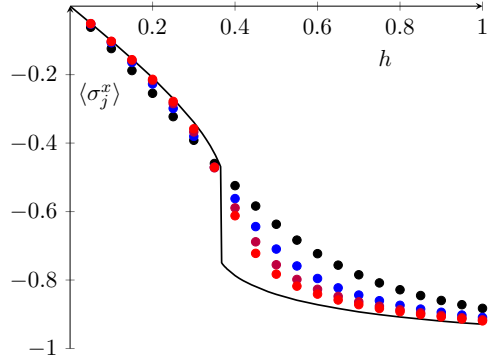

\subsection{Phase diagram at finite temperature}

\subsubsection{The free energy at finite temperature}
Let us now consider finite temperature equilibrium. It is known that in presence of long-range interactions, the canonical and microcanonical ensembles are not necessarily equivalent \cite{barre2001inequivalence}. We will restrict our study of the finite temperature phase diagram to the canonical ensemble only. In this case, the free energy at finite inverse temperature $\beta$ is defined by
\begin{equation}
    \mathfrak{f}(\beta)= \underset{L\to\infty}{\lim}-\frac{1}{\beta L}\log \Tr e^{-\beta H(h,\lambda)}\,.
\end{equation}
In the partition function $ \Tr e^{-\beta H(h,\lambda)}$, the terms with density of excitations $\nu$ contribute as $e^{-\beta L \mathcal{E}(\nu)}$ with $\mathcal{E}(\nu)$ the energy density given in \eqref{enu}, and there are asymptotically $e^{L S(\nu)}$ of them, with the entropy
\begin{equation}
    S(\nu)=-\int_0^\pi \left[\nu(k)\log (2\pi\nu(k))+ (\tfrac{1}{2\pi}-\nu(k))\log (1-2\pi\nu(k))\right]\D{k}\,.       
\end{equation}
Hence the free energy is
\begin{equation}
  \mathfrak{f}(\beta)=\underset{\nu}{\min}\left[ \mathcal{E}(\nu)-\frac{1}{\beta}S(\nu)\right]\,,
\end{equation}
where the minimum is taken over the set of functions $0\leq \nu(k)\leq \frac{1}{2\pi}$. The extremality condition for the minimal $\nu$ is
\begin{equation}\label{nu}
   \nu(k)=\frac{1}{2\pi}\frac{1}{1+e^{\beta \partial_{\nu(k)}\mathcal{E}(\nu)}}\,,
\end{equation}
with $\partial_{\nu(k)}\mathcal{E}(\nu)$ given in \eqref{denu}. Hence, writing $\mathcal{E}(\nu)-\frac{1}{\beta}S(\nu)$ in terms of this $\nu(k)$, one finds that the free energy is
\begin{equation}
    F_\beta(x)=-\frac{1}{2\pi\beta}\int_0^\pi \log [2\cosh (2\beta \varepsilon_x(k))]\D{k}-\frac{(x-h)^2}{4\lambda}\,,
\end{equation}
with $x$ satisfying \eqref{D0}, which is exactly $F_\beta'(x)=0$. In case of multiple solutions $x$, the free energy is the one that minimizes $F_\beta(x)$ among these solutions. One notes, however, that as in the zero-temperature case, $x$ can be a local maximum of $F_\beta(x)$ if there is only one $x$ that satisfies $F_\beta'(x)=0$.

\subsubsection{Case $\lambda\geq 0$}
We start with the case $\lambda\geq 0$. Looking for solutions to $F_\beta'(x)=0$, we compute
\begin{equation}\label{fprime}
\begin{aligned}
    F_\beta''(x)=&-\frac{1}{2\lambda}-\frac{1}{\pi}\int_0^\pi \partial_x^2 \varepsilon_x(k) \tanh(2\beta\varepsilon_x(k))\D{k}\\
    &-\frac{2\beta}{\pi}\int_0^\pi (\partial_x \varepsilon_x(k))^2(1-\tanh^2(2\beta\varepsilon_x(k)))\D{k}\,.
\end{aligned}
\end{equation}
Since we always have $\varepsilon_x(k),\partial_x^2 \varepsilon_x(k)\geq 0$, we see that for $\lambda\geq 0$ we have $F_\beta''(x)<0$ for all $x$. Since $F_\beta'(\pm \infty)\to \mp\infty$, we conclude that there is a unique solution to $F_\beta'(x)$=0 and that solution is a smooth function of $h,\lambda,\beta$. Moreover, for $\beta<\infty$ the free energy $F_\beta(x)$ is a smooth function of $x$. Hence there is no phase transition at finite temperature for $\lambda\geq 0$, and the system is in the high-temperature phase as soon as $\beta<\infty$.

\subsubsection{Case $\lambda<0$ and $\beta$ small \label{hightemperature}}
We now consider $\lambda<0$ and the high temperature phase, namely $\beta$ close to $0$ in a sense that we will precise. Using the expression for $\varepsilon_x(k)$, we write
\begin{equation}\label{fprime}
\begin{aligned}
    F_\beta''(x)=&-\frac{1}{2\lambda}-\frac{2\beta}{\pi}\int_0^\pi(1-( \partial_x \varepsilon_x(k))^2) \frac{\tanh(2\beta\varepsilon_x(k))}{2\beta\varepsilon_x(k)}\D{k}\\
    &-\frac{2\beta}{\pi}\int_0^\pi (\partial_x \varepsilon_x(k))^2(1-\tanh^2(2\beta\varepsilon_x(k)))\D{k}\,.
\end{aligned}
\end{equation}
Using then $ |\tfrac{\tanh x}{x}|\leq 1$ and $|\partial_x\varepsilon_x(k)|\leq 1$, we find that for $\lambda<0$ and
\begin{equation}\label{high}
    \beta < \frac{1}{8|\lambda|}
\end{equation}
we have $F_\beta''(x)>0$ for all $x$, so only one solution to $F_\beta'(x)=0$. Hence at least in the region \eqref{high}, the system is in the high-temperature phase. 

This also implies that at fixed $\beta<\infty$, there is always a band $\lambda_0(\beta)<\lambda<0$ with $\lambda_0(\beta)<-\tfrac{1}{8\beta}$ in which the system is in the high-temperature phase.

\subsubsection{Case $\lambda<0$ and $\beta$ large \label{lowtemperature}}
We now fix $\lambda<0$ and $h$ and consider the low temperature phase, namely $\beta>>1$ in a sense that we will precise. Comparing the finite temperature $F'_\beta(x)$ and the zero temperature $F'(x)$ we have
\begin{equation}
    F'_\beta(x)-F'(x)=\frac{1}{\pi} \int_0^\pi \partial_x\varepsilon_x(k)(1-\tanh(2\beta\varepsilon_x(k)))\D{k}\,.
\end{equation}
Hence
\begin{equation}
    |F'_\beta(x)-F'(x)|\leq e^{-4\beta \min(|1-x|,|1+x|)}\,,
\end{equation}
and it follows that for any fix $\delta>0$, $F'_\beta(x)$ converges uniformly to $F'(x)$ in the region $||x|-1|>\delta$ when $\beta\to\infty$. Now, we know that $F'(\pm 1)\to-\infty$ and that for $\lambda<0$ the value of $x$ corresponding to the ground is a minimum of $F(x)$ (see the end of Section \ref{negativelam}). Hence at fixed $\lambda<0$, there is a $\delta>0$ such that for all $h$, the value of $x$ at zero temperature satisfies $||x|-1|>\delta$. It follows that solutions to $F'_\beta(x)=0$ have a smooth dependence in $\beta$ for $\beta$ in a neighbourood of $\infty$. 

Let us now assume that $h,\lambda$ are chosen away from the phase transition lines. In this case there is a unique global minimum of $F(x)$, and the other possible local minima take a value $\delta'>0$ larger. Hence for $\beta$ large enough the minimum of the function $F_\beta(x)$ will vary smoothly in $\beta$. We conclude thus that for $\lambda<0$ and $h,\lambda$ away from the phase transition lines at zero temperature, there is no phase transition in $x$ for $\beta_0(h,\lambda)<\beta \leq \infty$ with some large enough $\beta_0(h,\lambda)<\infty$.

\subsubsection{Existence of a finite-temperature phase transition for $\lambda<0$ and $h=0$}
Let us fix $\lambda<0$  and consider the case $h=0$. From Section \ref{hightemperature} we know that for $\beta$ small enough the only solution to $F_\beta'(x)=0$ is $x=0$. Now, we have
\begin{equation}
    F''_\beta(0)=-\frac{1}{2\lambda}-\frac{\tanh(2\beta)}{2}-\beta(1-\tanh^2(2\beta))\,.
\end{equation}
So $F''_\beta(0)\geq 0$ for all $\beta$ is equivalent to
\begin{equation}
    \lambda\geq \lambda_c'\,,
\end{equation}
with
\begin{equation}
    \lambda_c'=-\min_{\beta>0} \frac{1}{\tanh(2\beta)+2\beta(1-\tanh^2(2\beta))}\,.
\end{equation}
The value $\beta_c'$ for which this minimum is attained is the unique positive solution to
\begin{equation}
    2\beta_c' \tanh(2\beta_c')=1\,.
\end{equation}
Numerically, this is
\begin{equation}
    \lambda_c'=-0.833557...\,,\qquad \beta_c'=0.599839...\,.
\end{equation}
Hence for $\lambda<\lambda_c'$ there is necessarily a range of temperatures for which $F''_\beta(0)<0$, hence for which $x=0$ is not a minimum of $F_\beta(x)$. Hence at least for $\lambda<\lambda_c'$ there is a phase transition at finite temperature.

\subsubsection{More details: numerical study}
To say more about the phase diagram, we need to carry out a precise analytical study of $F_\beta(x)$, similar to that performed in the zero-temperature case. Unfortunately, the finite temperature case is significantly more involved and we were not able to prove the following result generalizing \eqref{signff}. What we observe numerically is the existence of $\beta_c''>0$ that satisfies $F''''_{\beta_c''}(0)=0$ and that is such that
\begin{equation}\label{difficult}
\begin{aligned}
 &\text{for }\beta<\beta_c''\,,\quad \forall x>0\,,\quad F'''_\beta(x)>0\\
 &\text{for }\beta>\beta_c''\,,\text{ there exists a unique } 0<x_0<1\text{ such that } \begin{cases}
 F'''_\beta(x)<0\quad \text{for }0<x<x_0\\ F'''_\beta(x)>0\quad \text{for }x>x_0\,.
    \end{cases}
\end{aligned}
\end{equation}
We will assume \eqref{difficult} satisfied in the following. Let us investigate the consequences of this property.

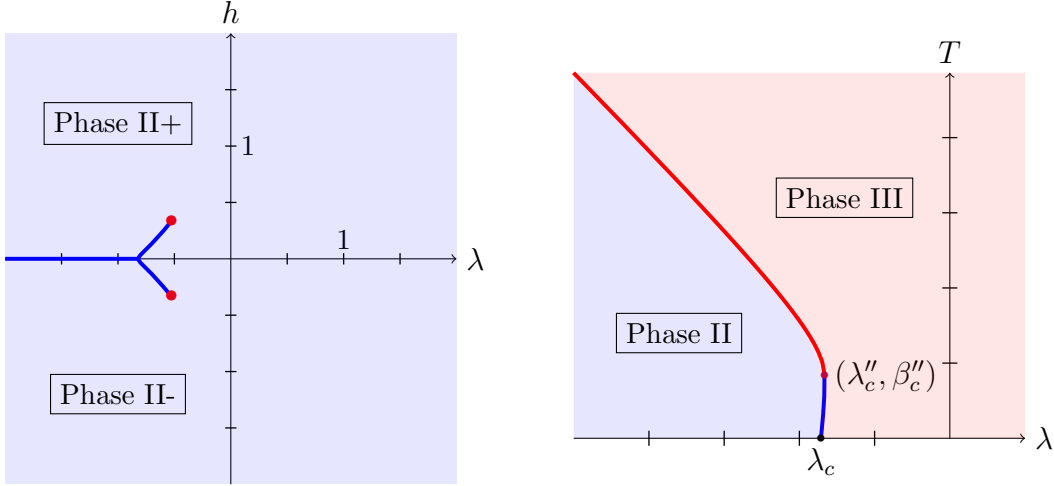
\begin{figure}[h]
\begin{center}
\begin{tikzpicture}[scale=1.5]
\coordinate (A) at (-0.825,0);
\coordinate (B) at (-0.525,-0.335);
\coordinate (C) at (-0.525,0.335);
\draw[->] (0,-2) -- (0,2) node[above]{\large $h$};
\draw[->] (-2,0) -- (2,0) node[right]{\large $\lambda$};
 \draw[black, line width=0.5pt] (0.5,-0.05) -- (0.5,0.05);
 \draw[black, line width=0.5pt] (1,-0.05) -- (1,0.05);
 \draw[black, line width=0.5pt] (1.5,-0.05) -- (1.5,0.05);
 \draw[black, line width=0.5pt] (-0.5,-0.05) -- (-0.5,0.05);
 \draw[black, line width=0.5pt] (-1,-0.05) -- (-1,0.05);
 \draw[black, line width=0.5pt] (-1.5,-0.05) -- (-1.5,0.05);
 \draw[black, line width=0.5pt] (-0.05,0.5) -- (0.05,0.5);
 \draw[black, line width=0.5pt] (-0.05,1) -- (0.05,1);
 \draw[black, line width=0.5pt] (-0.05,1.5) -- (0.05,1.5);
 \draw[black, line width=0.5pt] (-0.05,-0.5) -- (0.05,-0.5);
 \draw[black, line width=0.5pt] (-0.05,-1) -- (0.05,-1);
 \draw[black, line width=0.5pt] (-0.05,-1.5) -- (0.05,-1.5);
%\draw node[below left] at (A) {\large $\lambda_c(\beta)$};
\draw[blue, line width=1.5pt] (-2,0)--(A);
 \draw[blue, line width=1.5pt] (B)--(-0.55,-0.305)--(-0.6,-0.245)-- (-0.7,-0.135)--(-0.8,-0.035)--(A);
 \draw[blue, line width=1.5pt] (C)--(-0.55,0.305)--(-0.6,0.245)-- (-0.7,0.135)--(-0.8,0.035)--(A);
\draw node[red] at (C) {\small $\bullet$};
\draw node[red] at (B) {\small $\bullet$};
\draw node[black,right] at (0,1) {$1$};
\draw node[black,above] at (1,0) {$1$};
 \fill[opacity=0.1, blue] (-2,-2)--(2,-2)--(2,2)--(-2,2);
\node[draw] at (-1,1.2)   (a) {Phase II+};
\node[draw] at (-1,-1.2)   (a) {Phase II-};
%\addplot[
%    line width=1pt,
%    color=red,
%    opacity=1]
%table[
%           x expr=\thisrowno{0}, 
%           y expr=\thisrowno{1},
%         ]{potential.dat};
\end{tikzpicture}
$\qquad$
\begin{tikzpicture}[scale=2]
\coordinate (A) at (-0.8565,0);
\coordinate (B) at (-0.8335,0);
\coordinate (C) at (-0.8335,1.667/4);
\coordinate (D) at (0,1.667/4);
 \draw[black, line width=0.5pt] (-0.5,-0.05) -- (-0.5,0.05);
 \draw[black, line width=0.5pt] (-1,-0.05) -- (-1,0.05);
 \draw[black, line width=0.5pt] (-1.5,-0.05) -- (-1.5,0.05);
 \draw[black, line width=0.5pt] (-2,-0.05) -- (-2,0.05);
 \draw[black, line width=0.5pt] (-0.05,0.5) -- (0.05,0.5);
 \draw[black, line width=0.5pt] (-0.05,1) -- (0.05,1);
 \draw[black, line width=0.5pt] (-0.05,1.5) -- (0.05,1.5);
 \draw[black, line width=0.5pt] (-0.05,2) -- (0.05,2);
\draw[->] (-2.5,0) -- (0.5,0) node[right]{\large $\lambda$};
\draw[->] (0,0) -- (0,2.4315) node[above]{\large $T$};
\draw[blue, line width=1.5pt] (C)..controls (-0.8335,1./4)..(A);
\draw node[below] at (A) {\large $\lambda_{c}$};
\draw node at (A) {\tiny $\bullet$};
\draw node[right] at (C) {\large $(\lambda_{c}'',\beta_c'')$};

 \fill[opacity=0.1, red] (0.5,0)--(A)--(C)..controls (-0.97749,3/4)..(-1.169,4/4)--(-2.5,2.4315)--(0.5,2.4315);
 \fill[opacity=0.1, blue] (-2.5,0)--(A)--(C)..controls (-0.97749,3/4)..(-1.169,4/4)--(-2.5,2.4315);

\draw[scale=1, domain=1.667/4:2.4315, smooth, line width=1.5pt, variable=\x, red] plot ({-1/( tanh(2/\x/4)+2/\x/4*(1-tanh(2/\x/4)*tanh(2/\x/4)) )}, {\x});
\draw[purple] node at (C) {\tiny $\bullet$};
\node[draw] at (-0.7,1.6)   (a) {Phase III};
\node[draw] at (-1.8,0.7)   (a) {Phase II};
\end{tikzpicture}
\caption{Left: phase diagram at temperature $T=0.33$ with the same conventions as the left panel of Figure \ref{phasediagram}. The red dots indicate critical points with second-order phase transition, occuring at $\approx (-0.525,\pm 0.335)$. Right: sketch of the finite temperature phase diagram of $H(h,\lambda)$ for $h=0$, in terms of $\lambda$ and temperature $T=1/\beta$. The numerical values are quoted in \eqref{numlambda} and \eqref{lambdac}. The blue thick line indicates a first-order phase transition and the red thick line a second-order phase transition whose expression is \eqref{curvee}. The curve of the blue line has been slightly exaggerated on the plot so that the inverse melting region is visible.} 
\label {phasediagramt}
\end{center}
\end {figure}

\subsubsection*{Second-order finite temperature phase transition line}
We know that at fixed $\lambda<\lambda_c'$ there is a phase transition at finite temperature, since $F''_\beta(0)$ is positive for $\beta=0$ and negative for $\beta$ large enough. However, the transition does not necessarily occur at $x=0$, as there could be another solution $x>0$ to $F'_\beta(x)=0$ appearing as we increase $\beta$ before $F''_\beta(0)$ becomes negative. Because of \eqref{difficult}, a necessary and sufficient condition for this latter fact not to happen is that $F''''_{\beta_0(\lambda)}(0)>0$ at the value of $\beta=\beta_0(\lambda)$ for which $F''_{\beta_0(\lambda)}(0)=0$. Hence if $\beta_0(\lambda)<\beta_c''$ then there is a second-order phase transition at $\beta_0(\lambda)$. Otherwise the transition is first order. We define then $\lambda_c''$ by $F''_{\beta_c''}(0)=0$, i.e.
\begin{equation}
    \lambda_c''=-\frac{1}{\tanh(2\beta_c'')+2\beta_c'' (1-\tanh^2(2\beta_c''))}\,.
\end{equation}
Numerically, we find
\begin{equation}\label{numlambda}
 \lambda_c''=-0.834428...  \,,\quad \beta_c''=0.571496...\,.
\end{equation}
For $\lambda<\lambda_c''$ there is always a second-order phase transition at finite temperature $\beta$ satisfying $F''_{\beta}(0)=0$, namely given by the curve
\begin{equation}\label{curvee}
    \lambda=-\frac{1}{\tanh(2\beta)+2\beta(1-\tanh^2(2\beta))}\,,\qquad \lambda<\lambda_c''\,.
\end{equation}

\subsubsection*{First-order phase transition line at finite temperature}
We see that $\lambda_c''<\lambda_c'$. Hence there is a region that includes at least $\lambda_c''<\lambda<\lambda_c'$ for which there is a first-order phase transition at finite temperature as we increase $\beta$ occurring before (and so, hiding) the second-order phase transition. 

\subsubsection*{Inverse melting/freezing region}
We finally note that $\lambda_c<\lambda_c'$. At zero temperature for $\lambda>\lambda_c$ and $h=0$ the system is ordered for $\sigma^z$ and disordered for $\sigma^x$. We know from Section \ref{lowtemperature} that at low temperature the system will remain disordered for $\sigma^x$ (and becomes immediately disordered for $\sigma^z$). At high temperature the system is also disordered for $\sigma^x$. However, we also know that for $\lambda<\lambda_c'$ there is necessarily a phase transition for $\sigma^x$ as we lower the temperature from the infinite temperature region, entering thus an ordered phase for $\sigma^x$. Hence at least for $\lambda_c<\lambda<\lambda_c'$ there is inverse melting/freezing \cite{greer2000too}, in the sense that increasing/decreasing the temperature drives the system into an ordered/disordered phase. We note that the region of parameter space $[\lambda_c,\lambda_c']$ is very tiny, but the inverse melting/freezing regions of other models are also typically  tiny \cite{schupper2004spin,thomas2011simplest}. Gathering these different findings, the phase diagram of the model at finite temperature is plotted in Figure \ref{phasediagramt}.

\section{Summary and discussion}
In this paper, we showed that the transverse field Ising model with additional all-to-all interactions  can be exactly solved with MFT in the thermodynamic limit. This is a surprising fact as the model contains short-range interactions, which usually spoils the exactness of MFT. We obtained an expression for the energy density of any state in the thermodynamic limit, as well as expectation value of local operators within any state. We then studied the phase diagram of the model at both zero and finite temperature. These results are interesting for the following reasons.

Firstly, this work provides a new solvable model that is in a sense intermediate between a free model and an interacting model. In this optics it can be used as a testbed for studying features of many-body quantum physics. This is particularly relevant since it displays properties that are absent from the TFIM while remaining solvable, most notably a phase transition at finite temperature. Such transitions are usually absent from 1D models because of the Mermin-Wagner theorem, and 2D quantum models which can display them present huge obstacles. Hence this model could open valuable analytical studies. Among other facts, the model also displays critical behaviour with a marginally relevant perturbation, as well as a region with inverse melting/freezing. An interesting open question is whether the model displays thermalization like a generic many-body quantum model, or if it behaves like an integrable model.

Secondly, we introduced a novel method to solve the model in the thermodynamic limit. It consists in looking for an eigenstate of the model under the form of an eigenstate of a free model whose parameters are modulated by the density of momenta in a basis where the Hamiltonian only creates  a finite number of excitations. The fact that the method only works in the thermodynamic limit is appealing, since it leaves hope for more general applicability beyond this model. However, importantly, the present method also relies on the exact solvability of the MFT Hamiltonian. For these reasons, we plan to investigate in future works to what extent the method can be applied to more general models.

\vspace{0.5cm}

{\em \bf Acknowledgments:}
We are grateful to M. Levin for useful discussions and comments. We thank F.H.L. Essler for comments on the draft. This work was supported by the Kadanoff Center for Theoretical Physics at University of Chicago, and by the Simons Collaboration on Ultra-Quantum Matter.

\appendix

\medskip
\medskip
\medskip
\medskip
\medskip

\noindent{\Large \bf Appendix}

\section{Proof of \eqref{symm} \label{appendix1}}

\begin{prop}
We consider $\phi_{\pmb{k}}(q)$ a functional of $\pmb{k}$ that is assumed to depend only on $\rho$ the density of momenta of $\pmb{k}$ in the thermodynamic limit, then denoted $\phi_{\rho}(q)$, and with a smooth dependence in $\rho$. For $k_1,...,k_{n_L}\in K$ distinct with $n_L=O(L)$, we define the quantity
\begin{equation}
Z(k_1,...,k_{n_L})=\prod_{j=1}^{n_L} \phi_{ \{k_1,...,k_{j-1}\} }(k_j)\,.
\end{equation}
We have for all $k,q$
\begin{equation}\label{peir}
\frac{\partial_{\rho(q)}\phi_\rho (k)}{\phi_\rho(k)}=\frac{\partial_{\rho(k)}\phi_\rho (q)}{\phi_\rho(q)}\,,
\end{equation}
if and only if we have
\begin{equation}
\frac{Z(k_1,...,k_p,...,k_q,...,k_{n_L})}{Z(k_1,...,k_q,...,k_p,...,k_{n_L})}=1+O(L^{-1})
\end{equation}
for all sequences $k_1,...,k_{n_L}$ and any $p<q$.
\end{prop}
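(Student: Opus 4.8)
The plan is to recognise the statement as a discrete, ``thermodynamic--limit'' version of the Poincar\'e lemma: $Z(k_1,\dots,k_{n_L})$ is a discretised line integral, along the path in density space obtained by adding the momenta one at a time, of the $\rho$--dependent one--form $k\mapsto \partial_{\rho(k)}\phi_\rho(\cdot)/\phi_\rho(\cdot)$, and \eqref{peir} is exactly the closedness of this form. It is convenient to set $D_\rho(k,q)=\frac{\partial_{\rho(k)}\phi_\rho(q)}{\phi_\rho(q)}-\frac{\partial_{\rho(q)}\phi_\rho(k)}{\phi_\rho(k)}$, so that \eqref{peir} reads $D\equiv 0$; note that $D$ is antisymmetric and $D_\rho(k,k)=0$. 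The single ingredient used throughout is that appending a momentum $\ell$ to a set of density $\rho$ changes the density by $\tfrac1L\hat\delta_\ell$, with $\hat\delta_\ell$ a unit--mass bump at $\ell$, so that by the assumed smoothness $\phi_{\pmb k\cup\{\ell\}}(q)=\phi_\rho(q)\bigl(1+\tfrac1L\tfrac{\partial_{\rho(\ell)}\phi_\rho(q)}{\phi_\rho(q)}+O(L^{-2})\bigr)$, uniformly along any path. From this the elementary step follows at once: exchanging two consecutive entries $k_p,k_{p+1}$ of the sequence, with $\sigma$ the density of $\{k_1,\dots,k_{p-1}\}$, leaves every factor of $Z$ unchanged except two and multiplies $Z$ by $1+\tfrac1L D_\sigma(k_{p+1},k_p)+O(L^{-2})$.

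Next I would establish a master identity for a general transposition. Write $a=k_p$, $b=k_q$ with $p<q$, let $Z_1$ be the original product and $Z_2$ the one with $a,b$ interchanged, and let $\rho_j$ denote the density of $\{k_1,\dots,k_{p-1},k_{p+1},\dots,k_{j-1}\}$, so that $\rho_{p+1}=:\sigma$ (density of the first $p-1$ momenta) and $\rho_q=:R$ (density of those together with $k_{p+1},\dots,k_{q-1}$). Only the factors at positions $p,\dots,q$ change, and the set of momenta sitting there is the same in $Z_1$ and $Z_2$. Taylor--expanding the $q-p+1$ affected factors to order $1/L$, the factors from positions $p$ and $q$ contribute $\frac{\phi_\sigma(a)\phi_R(b)}{\phi_\sigma(b)\phi_R(a)}$, while the interior ones contribute $\exp\!\bigl(\tfrac1L\sum_{j=p+1}^{q-1}\bigl[\tfrac{\partial_{\rho_j(a)}\phi_{\rho_j}(k_j)}{\phi_{\rho_j}(k_j)}-\tfrac{\partial_{\rho_j(b)}\phi_{\rho_j}(k_j)}{\phi_{\rho_j}(k_j)}\bigr]\bigr)$, the accumulated $O(L^{-2})$ remainders over $O(L)$ factors summing to $O(L^{-1})$. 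The crucial move is the splitting $\tfrac{\partial_{\rho_j(a)}\phi_{\rho_j}(k_j)}{\phi_{\rho_j}(k_j)}=\partial_{\rho_j(k_j)}\log\phi_{\rho_j}(a)+D_{\rho_j}(a,k_j)$, and likewise for $b$: since passing from $\rho_j$ to $\rho_{j+1}$ adds the single momentum $k_j$, the pieces $\tfrac1L\partial_{\rho_j(k_j)}\bigl[\log\phi_{\rho_j}(a)-\log\phi_{\rho_j}(b)\bigr]$ telescope into $\log\frac{\phi_R(a)\phi_\sigma(b)}{\phi_R(b)\phi_\sigma(a)}+O(L^{-1})$, which exactly cancels the boundary prefactor. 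This leaves, with no assumption on $\phi$, the identity
\[
\frac{Z_1}{Z_2}=\exp\!\Bigl(\tfrac1L\sum_{j=p+1}^{q-1}\bigl[D_{\rho_j}(a,k_j)-D_{\rho_j}(b,k_j)\bigr]+O(L^{-1})\Bigr).
\]

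The equivalence then drops out. For the ``if'' direction, \eqref{peir} means $D\equiv 0$, so the identity gives $Z_1/Z_2=e^{O(L^{-1})}=1+O(L^{-1})$ for every sequence and every $p<q$. For the ``only if'' direction I would argue contrapositively: if \eqref{peir} fails there is a density $\rho_0$ and momenta $k_0\neq\ell_0$ with $D_{\rho_0}(k_0,\ell_0)\neq 0$, and by continuity $\rho_0$ may be taken bounded strictly inside $(0,\tfrac1{2\pi})$ near $k_0,\ell_0$ and $D_\rho(k,\ell)$ of constant nonzero sign for $(\rho,k,\ell)$ near $(\rho_0,k_0,\ell_0)$. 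Fix a small $\varepsilon>0$ and build a sequence of length $n_L=\Theta(L)$: a first block of momenta realising the background density $\rho_0$ while avoiding $k_0$ and a tiny window $W$ around $\ell_0$; then $a=k_0$; then $\varepsilon L$ momenta filling $W\setminus\{\ell_0\}$; then $b=\ell_0$; then arbitrary distinct momenta. For the transposition of $a$ and $b$, every $\rho_j$ occurring in the sum differs from $\rho_0$ by a density of mass $\le\varepsilon$ and every $k_j$ lies in $W$, so the sum in the master identity equals $\varepsilon D_{\rho_0}(k_0,\ell_0)+O(\varepsilon^2)$ (using $D_{\rho_0}(\ell_0,\ell_0)=0$ and continuity, and that $W$ has width $O(\varepsilon)$). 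Hence $Z_1/Z_2=\exp\bigl(\varepsilon D_{\rho_0}(k_0,\ell_0)+O(\varepsilon^2)+O(L^{-1})\bigr)$, which for a fixed small $\varepsilon$ tends to a limit $\neq 1$ as $L\to\infty$; in particular it is not $1+O(L^{-1})$, contradicting the Property.

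The main obstacle is the sharpness of the error term. A naive estimate keeping only leading orders shows that $Z_1/Z_2$ depends only on the (common) endpoint density and thus yields merely $Z_1/Z_2=1+O(1)$, which is useless; upgrading this to $1+O(L^{-1})$ is precisely the telescoping cancellation of the subleading $O(1/L)$ corrections between the two boundary factors at positions $p,q$ and the interior factors, together with the bookkeeping that the $O(L^{-2})$ Taylor remainders summed over $O(L)$ factors stay $O(L^{-1})$. The regularity needed to Taylor--expand $\phi_{\pmb k}$ under an $O(1/L)$ change of the coarse--grained density, uniformly along an arbitrary path, is exactly the ``smooth dependence in $\rho$'' hypothesis. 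For the converse, the only point to verify is that the constructed discrete sum genuinely converges to a nonzero number, which is made transparent by concentrating the interior momenta in a small window around $\ell_0$.
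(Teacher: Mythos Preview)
Your proof is correct and follows essentially the same approach as the paper: both compute the ratio $Z/Z'$ as a boundary prefactor $\frac{\phi_\sigma(a)\phi_R(b)}{\phi_\sigma(b)\phi_R(a)}$ times an exponential of a sum over interior positions, then observe that swapping derivative indices via the symmetry makes that sum telescope exactly against the prefactor. Your packaging of this into a single master identity with the defect $D_\rho$ isolated is a slightly cleaner organisation (both directions drop out at once), and your converse construction---concentrating the interior momenta in a tiny window around $\ell_0$ so that $D_{\rho_j}(b,k_j)\approx D(\ell_0,\ell_0)=0$ while $D_{\rho_j}(a,k_j)$ stays bounded away from zero---is the same idea as the paper's two small intervals $I,J$.
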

\begin{proof}
%A permutation can be written in terms of at most $L$ transpositions. The second part of the assertion is thus equivalent to
%\begin{equation}
%\frac{Z(k_1,...,k_{n_L})}{Z(k'_1,...,k_{n'_L})}=1+O(L^{-1})
%\end{equation}
%for any transposition $k'_p=k_q$ and $k'_q=k_p$, with all other $k'_j=k_j$ for $j\neq p,q$. 
We will denote $\rho_j$ the state with momenta $k_1,...,k_{j-1}$, and $Z/Z'$ the ratio studied. We have
\begin{equation}
\begin{aligned}
\frac{Z}{Z'}&=\frac{\phi_{\rho_p}(k_p)}{\phi_{\rho_p}(k_q)}\left(\prod_{j=p+1}^{q-1}\frac{\phi_{\rho_j}(k_j)}{\phi_{\rho_j\setminus\{k_p\}\cup\{k_q\}}(k_j)}\right)\frac{\phi_{\rho_q}(k_q)}{\phi_{\rho_q\setminus\{k_p\}\cup\{k_q\}}(k_p)}\\
&=\frac{\phi_{\rho_p}(k_p)}{\phi_{\rho_p}(k_q)}\left(\prod_{j=p+1}^{q-1}\frac{\phi_{\rho_j}(k_j)}{\phi_{\rho_j}(k_j)-\frac{\partial_{\rho(k_p)}\phi_{\rho_j}(k_j)}{L}+\frac{\partial_{\rho(k_q)}\phi_{\rho_j}(k_j)}{L}}\right)\frac{\phi_{\rho_q}(k_q)}{\phi_{\rho_q}(k_p)}+O(L^{-1})\\
&=\frac{\phi_{\rho_p}(k_p)\phi_{\rho_q}(k_q)}{\phi_{\rho_p}(k_q)\phi_{\rho_q}(k_p)}\exp \left[\frac{1}{L}\sum_{j=p+1}^{q-1} \partial_{\rho(k_p)}\log \phi_{\rho_j}(k_j) - \partial_{\rho(k_q)}\log \phi_{\rho_j}(k_j)\right]+O(L^{-1})\,.
\end{aligned}
\end{equation}
If \eqref{peir} is satisfied, then
\begin{equation}
\begin{aligned}
\frac{Z}{Z'}&=\frac{\phi_{\rho_p}(k_p)\phi_{\rho_q}(k_q)}{\phi_{\rho_p}(k_q)\phi_{\rho_q}(k_p)}\exp \left[\frac{1}{L}\sum_{j=p+1}^{q-1} \partial_{\rho(k_j)}\log \phi_{\rho_j}(k_p) - \partial_{\rho(k_j)}\log \phi_{\rho_j}(k_q)\right]+O(L^{-1})\\
&=\frac{\phi_{\rho_p}(k_p)\phi_{\rho_q}(k_q)}{\phi_{\rho_p}(k_q)\phi_{\rho_q}(k_p)}\exp \left[\log \frac{\phi_{\rho_q}(k_p)}{\phi_{\rho_p}(k_p)}-\log \frac{\phi_{\rho_q}(k_q)}{\phi_{\rho_p}(k_q)}\right]+O(L^{-1})\\
&=1+O(L^{-1})\,,
\end{aligned}
\end{equation}
because $\log \phi_{\rho_{j+1}}(k_p) =\log \phi_{\rho_j}(k_p)+\frac{1}{L} \partial_{\rho(k_j)}\log \phi_{\rho_j}(k_p)$. This shows the direction $\implies$ of the equivalence.

If now \eqref{peir} is not satisfied at some point, then there is $\epsilon_0>0$ such that for any $0<\epsilon<\epsilon_0$ there exist intervals $I, J$ such that for $k\in I, q\in J$
\begin{equation}
\frac{\partial_{\rho(q)}\phi_\rho (k)}{\phi_\rho(k)}-\frac{\partial_{\rho(k)}\phi_\rho (q)}{\phi_\rho(q)}>\epsilon\,.
\end{equation}
Taking $I$ small enough, we can always assume besides that for $k,q\in I$
\begin{equation}
\left|\frac{\partial_{\rho(q)}\phi_\rho (k)}{\phi_\rho(k)}-\frac{\partial_{\rho(k)}\phi_\rho (q)}{\phi_\rho(q)}\right|<\epsilon^2\,,
\end{equation}
because \eqref{peir} is satisfied if $k=q$. Then, considering a sequence $k_1,...,k_{n_L}$ such that $k_p\in J,k_q\in I$ and $k_j\in I$ for all $p<j<q$, together with $q-p=cL$ with $c>0$, we have
\begin{equation}
\begin{aligned}
\frac{Z}{Z'}&>\frac{\phi_{\rho_p}(k_p)\phi_{\rho_q}(k_q)}{\phi_{\rho_p}(k_q)\phi_{\rho_q}(k_p)}\\
&\qquad\times\exp \left[\epsilon c+O(\epsilon^2)+\frac{1}{L}\sum_{j=p+1}^{q-1} \partial_{\rho(k_j)}\log \phi_{\rho_j}(k_p) - \partial_{\rho(k_j)}\log \phi_{\rho_j}(k_q)\right]\\
&\qquad +O(L^{-1})\\
&>e^{\epsilon c+O(\epsilon^2)}+O(L^{-1})\,,
\end{aligned}
\end{equation}
which cannot be $1+O(L^{-1})$ for $\epsilon$ small enough. This shows the other direction of the equivalence.
\end{proof}

\section{Proof of Property \ref{maximumcondition} \label{negative}}
Let us first  do the change of variable
\begin{equation}\label{gg}
    f_2(k,q)=-g(k,q) \frac{\phi_{\rho_*}(q)+\tfrac{1}{\phi_{\rho_*}(q)}}{\sin q}+\pi(\phi_{\rho_*}(q)+\tfrac{1}{\phi_{\rho_*}(q)})^2 \delta(k-q)\,.
\end{equation}
The equation \eqref{ricatti} becomes
\begin{equation}\label{eqree}
\begin{aligned}
 \int_0^\pi g(k,p)g(p,q)\D{p}=&\pi^2(\phi_{\rho_*}(q)+\tfrac{1}{\phi_{\rho_*}(q)})^2\sin^2q \delta(k-q)+16 \pi\lambda \frac{\sin q}{\phi_{\rho_*}(q)+\tfrac{1}{\phi_{\rho_*}(q)}}\,.
\end{aligned}
\end{equation}
Hence $g(k,q)$ is a square root of the matrix on the right-hand side. As a diagonal matrix plus a rank one matrix, the right-hand side is diagonalizable. So for each eigenvalue there is a $\pm$ sign to choose when considering a solution $g$. There are thus several solutions to the equation \eqref{ricatti}. In terms of $g$, $\mathcal{H}$ is then
\begin{equation}\label{ghh}
    \mathcal{H}(k,q)=2 \Re g(k,q) \frac{\phi_{\rho_*}(q)+\tfrac{1}{\phi_{\rho_*}(q)}}{\sin q}\,.
\end{equation}
We now introduce the following matrix for $0\leq t\leq 1$
 \begin{equation}\label{G}
     G_t(k,q)=\pi^2(\phi_{\rho_*}(q)+\tfrac{1}{\phi_{\rho_*}(q)})^2\sin^2q \delta(k-q)+16 \pi\lambda t\frac{\sin q}{\phi_{\rho_*}(q)+\tfrac{1}{\phi_{\rho_*}(q)}}\,.
 \end{equation}
The matrix on the right-hand side of \eqref{eqree} is $G_1$, and $G_t$ has a smooth dependence on $t$. Let us show that $G_t(k,q)$ is positive definite for all $0\leq t\leq 1$ if the condition of Property \ref{maximumcondition} is satisfied, under the assumption \eqref{14}, and that $G_1$ is not positive definite if the condition of Property \ref{maximumcondition} is not satisfied. 

Using that
\begin{equation}
    \sin q(\phi_{\rho_*}(q)+\tfrac{1}{\phi_{\rho_*}(q)})= 2 \mathfrak{s}(q)\varepsilon_x(q)\,,
\end{equation}
with $\mathfrak{s}(q)$ the $\pm$ sign appearing in \eqref{phi}, we have
\begin{equation}\label{GG}
    G_t(p,q)=4\pi^2 \varepsilon_x(q)^2 \delta(k-q)+8\pi\lambda t \frac{\mathfrak{s}(q)\sin^2 q}{\varepsilon_x(q)}\,.
\end{equation}
Using the matrix-determinant lemma holding for an invertible matrix $A$ and column vectors $u,v$
\begin{equation}
    \det(A+uv^t)=(1+v^t A^{-1} u)\det A\,,
\end{equation}
we find by writing the characteristic polynomial of \eqref{GG} that a negative eigenvalue $-z$ of $G_t$ with $z>0$ must satisfy
\begin{equation}\label{eigen}
    1+8\pi\lambda t \int_0^\pi \mathfrak{s}(k) \frac{\sin^2 k}{\varepsilon_x(k)} \cdot \frac{1}{4\pi^2 \varepsilon_x(k)^2+z}\D{k}=0\,.
\end{equation}
Hence in terms of $\nu(k)$ we have
\begin{equation}\label{znu}
    1+8\pi\lambda t\int_0^\pi \frac{\sin^2 k}{\varepsilon_x(k)}\frac{1-4\pi\nu(k)}{4\pi^2\varepsilon_x(k)+z}\D{k}=0\,.
\end{equation}
If $\lambda\geq 0$, using the assumption \eqref{14}, this equation cannot hold for $z\geq 0$ since the integrand is strictly positive. Hence the matrix on the right-hand side of \eqref{eqree} has only positive eigenvalues if $\lambda\geq 0$. Let us now consider $\lambda<0$. Using $z>0$, $0\leq t\leq 1$ and the assumption \eqref{14}, we have
\begin{equation}
    \left|8\pi t\lambda \int_0^\pi \frac{\sin^2 k}{\varepsilon_x(k)}\frac{1-4\pi\nu(k)}{4\pi^2\varepsilon_x(k)+z}\D{k} \right|< \frac{2|\lambda|}{\pi}\int_0^\pi \frac{\sin^2 k}{\varepsilon_x(k)^3}(1-4\pi\nu(k))\D{k}\,.
\end{equation}
We remark now that
\begin{equation}
    \frac{\sin^2 k}{\varepsilon_x(k)^3}=\partial_x^2 \varepsilon_x(k)\,.
\end{equation}
But the minimality condition \eqref{maximumcondition} for $\lambda<0$ written in terms of $\mathcal{E}(\nu)$ read
\begin{equation}
    \frac{2|\lambda|}{\pi}\int_0^\pi \partial_x^2 \varepsilon_x(k)(1-4\pi\nu(k))\D{k}\leq 1\,.
\end{equation}
Hence we have
\begin{equation}
     \left|8\pi t\lambda \int_0^\pi \frac{\sin^2 k}{\varepsilon_x(k)}\frac{1-4\pi\nu(k)}{4\pi^2\varepsilon_x(k)+z}\D{k} \right|<1\,,
\end{equation}
and so \eqref{eigen} cannot be. Hence $G_t$ has only strictly positive eigenvalues for any $\lambda$.

Let us now assume that the condition of Property \ref{maximumcondition} is not satisfied, which implies $\lambda<0$, and set $t=1$. At $z=0$ the left-hand side of \eqref{znu} is negative, whereas for $z\to\infty$ it is positive. By continuity there has to be a solution $z>0$ to \eqref{znu}, and so $G_1$ has a negative eigenvalue.\\

We now define $g_t(k,q)$ by
 \begin{equation}
     \int_0^\pi g_t(k,p)g_t(p,q)\D{p}=G_t(k,q)\,.
 \end{equation}
 For $t=0$, we have
 \begin{equation}
     g_0(k,q)=s(q)\pi (\phi_{\rho_*}(q)+\tfrac{1}{\phi_{\rho_*}(q)})\sin q \delta(k-q)\,,
 \end{equation}
 with $s(q)\in \{1,-1\}$ a $q$-dependent sign. Assuming the condition of Property \ref{maximumcondition} satisfied, $G_t(p,q)$ is always positive definite for $0\leq t\leq 1$. Hence each of these solutions produces a family of solutions $g_t(k,q)$ that are continuous in $t$, and whose eigenvalues never vanish. So for each solution, the $\mathcal{H}_t$ defined in terms of $g_t$ is smooth in $t$ and its determinant never vanishes. Hence a solution $g_t$ gives a positive definite $\mathcal{H}_t$ at $t=1$ if and only if it is gives a positive definite $\mathcal{H}_0$ at $t=0$. And at $t=0$ only the solution $g_0(k,q)=\pi (\phi_{\rho_*}(q)+\tfrac{1}{\phi_{\rho_*}(q)})\sin q \delta(k-q)$ gives a positive definite $\mathcal{H}_0$. Hence, at $t=1$ there is a unique solution $g$ such that $\mathcal{H}_1=\mathcal{H}$ is positive definite.\\
 
 Assuming now the condition of Property \ref{maximumcondition} not satisfied, $g$ has to have at least one purely imaginary eigenvalue. But since $G_1$ is a real rank one perturbation of a real diagonal matrix, its eigenvectors are real, and so those of $g$ are real too. Hence $\Re g$ has a zero eigenvalue. It follows that $\mathcal{H}$ has also a zero eigenvalue and it is not positive definite.
 
% \section{Unique solution to \eqref{uniquesol} \label{uniquesolapp}}
% In this Appendix we show that there is a unique solution $f_n(q_1,...,q_n)$ to \eqref{uniquesol} that is symmetric in its arguments $q_1,...,q_n$, under the assumption \eqref{14}. Rewritten in terms of $g$ in \eqref{gg}, Eq \eqref{uniquesol} is
%\begin{equation}
%   \sum_{i=1}^n \frac{1}{\pi}\int_0^\pi g(q_i,k)f_n(q_1,...,\underbrace{k}_{i\text{th}},...,q_n)\D{k}=G_{f_2,...,f_{n-1}}(q_1,...,q_n)\,.
%\end{equation}

\section{Proof of \eqref{signff} \label{signf3}}
We have \begin{equation}
    F'''(x)=\frac{3}{\pi}\int_0^\pi \frac{(x+\cos k)\sin^2 k}{((x+\cos k)^2+\sin^2 k)^{5/2}}\D{k}\,.
\end{equation}
The change of variable $k\to\pi-k$ shows that it is an odd function of $x$, so that we focus on $x>0$. For $x>1$, since $x+\cos k> 0$ for all $k$, we immediately have $F'''(x)> 0$. The case $0<x<1$ is more involved. Let us first show that for $0<x<1$ we have
\begin{equation}\label{zero}
    \int_0^\pi \frac{(x+\cos k)\sin^2 k}{((x+\cos k)^2+\sin^2 k)^{2}}\D{k}=0\,.
\end{equation}
To that end, we write
\begin{equation}
    \begin{aligned}
     \int_0^\pi &\frac{(x+\cos k)\sin^2 k}{((x+\cos k)^2+\sin^2 k)^{2}}\D{k}=-\frac{1}{2}\partial_x  \int_0^\pi \frac{\sin^2 k}{(x+\cos k)^2+\sin^2 k}\D{k}\\
     &=-\frac{1}{4}\partial_x\left[ (1+x^2)\int_0^{2\pi} \frac{\sin^2 k}{(x+e^{ik})(x+e^{-ik})(x-e^{ik})(x-e^{-ik})}\D{k}\right]\\
       &=-\frac{i}{16}\partial_x\left[ (1+x^2)\oint \frac{(z^2-1)^2}{z(x+z)(zx+1)(x-z)(zx-1)}\D{z}\right]\,,
    \end{aligned}
\end{equation}
where the last integral is over the unit circle. The residue theorem gives for $0<x<1$
\begin{equation}
   \oint \frac{(z^2-1)^2}{z(x+z)(zx+1)(x-z)(zx-1)}\D{z}=-\frac{4i\pi}{1+x^2}\,.
\end{equation}
Hence we obtain \eqref{zero} indeed. Now, we write
\begin{equation}
      F'''(x)=\frac{3}{\pi}\int_0^\pi \frac{(x+\cos k)\sin^2 k}{(1+x^2+2x\cos k)^2} \frac{1}{\sqrt{1+x^2+2x\cos k}}\D{k}\,,
\end{equation}
and notice that the integrand is strictly positive for $k<\arccos(-x)\equiv q$ and strictly negative for $k>q$, while the factor $\frac{1}{\sqrt{1+x^2+2x\cos k}}$ is a strictly increasing function of $k$. This factor takes the value $\frac{1}{\sqrt{1-x^2}}$ at $k=q$. Hence
\begin{equation}
    F'''(x)< \frac{3}{\pi} \frac{1}{\sqrt{1-x^2}} \int_0^{q} \frac{|x+\cos k|\sin^2 k}{(1+x^2+2x\cos k)^2}\D{k}-\frac{3}{\pi} \frac{1}{\sqrt{1-x^2}} \int_{q}^{\pi} \frac{|x+\cos k|\sin^2 k}{(1+x^2+2x\cos k)^2}\D{k}\,,
\end{equation}
which is, using \eqref{zero}
\begin{equation}
    F'''(x)<0\,.
\end{equation}

\end{document}